\newtheorem{theorem}{Theorem}
\newtheorem{rem1}[theorem]{Remark}
\newtheorem{definition}[theorem]{Definition}
\newtheorem{proposition}[theorem]{Proposition}
\newtheorem{example}[theorem]{Example}
\newtheorem{ass1}[theorem]{Assumption}
\newtheorem{alg1}[theorem]{Algorithm}
\newenvironment{remark}{\begin{rem1}\rm}{\end{rem1}}
\newenvironment{assumption}{\begin{ass1}\rm}{\end{ass1}}
\numberwithin{equation}{section}
\numberwithin{theorem}{section}
\newcommand{\N}{\mathbb{N}}
\renewcommand{\P}{\mathbb{P}}
\newcommand{\R}{\mathbb{R}}
\newcommand{\T}{\mathbb{T}}
\newcommand{\lrparen}[1]{\left(#1\right)}
\newcommand{\lrsquare}[1]{\left[#1\right]}
\newcommand{\lrcurly}[1]{\left\{#1\right\}}
\newcommand{\Ft}[1]{\mathcal{F}_{#1}}
\newcommand{\Ldpshort}[2]{L^{#1}_{#2}}
\newcommand{\LdzF}[1]{\Ldpshort{}{#1}}
\newcommand{\LdpK}[3]{\Ldpshort{}{#2}(#3)}
\newcommand{\Lp}[2]{L^{#1}({#2})}
\newcommand{\Lpshort}[2]{L^{#1}_{#2}}
\newcommand{\LzF}[1]{\Lpshort{}{#1}(\R)}
\newcommand{\LpK}[3]{\Lpshort{}{#2}(#3)}
\newcommand{\E}[1]{\mathbb{E}\lrsquare{#1}}
\newcommand{\Et}[2]{\E{\left.#1 \right| \mathcal{F}_{#2}}}
\newcommand{\seq}[1]{(#1_t)_{t \in \T}}
\newcommand{\seqone}[1]{(#1_t)_{t \in \T\backslash \{0\}}}
\newcommand{\trans}[1]{#1^{\mathsf{T}}}
\newcommand{\transp}[1]{\trans{\lrparen{#1}}}
\newcommand{\diag}[1]{\operatorname{diag}\lrparen{#1}}
\newcommand{\cl}{\operatorname{cl}}
\newcommand{\co}{\operatorname{co}}
\newcommand{\interior}{\operatorname{int}}
\renewcommand{\succ}{\operatorname{succ}}
\newcommand{\as}{\text{a.s.}}
\newcommand{\1}{\mathbf{1}}
\newcommand{\K}{{\bf K}}
\newcommand{\XX}{\mathcal{X}}
\newcommand{\ZZ}{\mathcal{Z}}
\begin{document}
\title{A recursive algorithm for multivariate risk measures and a set-valued Bellman's principle}
\author{Zachary Feinstein \thanks{Washington University in St. Louis, Department of Electrical and Systems Engineering, St. Louis, MO 63108, USA, zfeinstein@ese.wustl.edu.} \and Birgit Rudloff \thanks{Vienna University of Economics and Business, Institute for Statistics and Mathematics, Vienna A-1020, AUT, birgit.rudloff@wu.ac.at.}}
\maketitle
\abstract{
A method for calculating multi-portfolio time consistent multivariate risk measures in discrete time is presented.  Market models
for $d$ assets with transaction costs or illiquidity and possible trading constraints are considered on a finite probability space.  The set of capital requirements at
each time and state is calculated recursively backwards in time along the event tree.  We motivate why the proposed procedure can be seen as a set-valued Bellman's principle, that might be of independent interest within the growing field of set optimization. We give conditions under which the backwards calculation of the sets reduces to solving a sequence of linear, respectively convex vector optimization problems. Numerical examples are given and include superhedging under illiquidity, the set-valued entropic risk measure, and the multi-portfolio time consistent version of the relaxed worst case risk measure and of the set-valued average value at risk.
\\[.2cm]
{\bf Keywords and phrases:} dynamic risk measures, transaction costs, set-valued risk measures, set optimization, vector optimization, algorithms, dynamic programming, Bellman's principle
\\[.2cm]
{\bf Mathematics Subject Classification (2010):} 91B30, 46N10, 26E25, 90C39 
}

\section{Introduction}
\label{sec_intro}

Multivariate risk measures, also called set-valued risk measures, were introduced in a static one-period setting by Meddeb, Jouini, Touzi~\cite{JMT04} in 2004, and further studied in~\cite{HR08,HH10,HHR10,CM13}, to consider risk evaluations of random vectors motivated by market models with transaction costs.
Dynamic set-valued risk measures were presented in \cite{FR12,FR12b,FR13-survey,TL12}.
The set-valued version of time-consistency, called multi-portfolio time consistency, was introduced in~\cite{FR12}
and it was shown to be equivalent to a recursive form for the risk measure.

In the present paper we will show that this recursive form can be seen as a set-valued version of Bellman's principle. On one hand, it enables us to calculate the
value of a risk measure, that is, the set of all risk compensating initial portfolio holdings, backwards in time. This is in the spirit of dynamic programming.  On the other hand, one can show that the principle of optimality holds true: the truncated optimal strategy calculated at time $t=0$ is still optimal for the optimization problems appearing at any later time point $t>0$.

We will provide conditions under which the recursive form is equivalent to a sequence of linear vector optimization problems, which can be solved with Benson's algorithm, see e.g.~\cite{B98,L11,HLR13,ELS12}. This is the case for most coherent, but also for convex polyhedral risk measures. More generally, we will give conditions under which the recursive form is equivalent to a sequence of convex vector optimization problems, that can be approximately solved by the algorithms proposed in \cite{LRU13}. Numerical examples will illustrate the results.
We will give a financial interpretation of the optimal strategies, which provides an intuition for the nested formulation of the risk measures.

The organization of this paper is as follows.  In section~\ref{sec_dynamic} we will give a short introduction to dynamic set-valued risk measures and multi-portfolio time consistency based on \cite{FR12,FR12b}.  
The main results of this paper are in sections~\ref{sec_alg} and~\ref{sec_computation}, where we reformulate the problem in terms of vector optimization. We will discuss the linear case in section~\ref{sec_polyhedral}, and the case of convex vector optimization problems in section~\ref{sec_convex}. In section~\ref{sec_bellman} we give interpretations and the link to Bellman's principle of optimality. In section~\ref{sec_market}, we study the calculation of market extensions of set-valued dynamic risk measures that appear when trading opportunities at the market are taken into account. Interpretations of the results will be given.
Numerical examples will be studied in section~\ref{sec_ex}, including superhedging, the relaxed worst case risk measure, the average value at risk, and the entropic risk measure.

\section{Set-valued dynamic risk measures}
\label{sec_dynamic}
Consider a discrete time setting $\T = \{0,1,...,T\}$ with finite time horizon $T$ and a finite filtered probability space $(\Omega,\Ft{},\seq{\mathcal{F}},\P)$ with the power set sigma algebra, i.e. $\Ft{} = 2^{\Omega}$, and $\Ft{T} = \Ft{}$. 
Denote by $\LdzF{t} = \Lp{0}{\Omega,\Ft{t},\P;\R^d}$ the linear space of the equivalence classes of $\Ft{t}$-measurable functions $X: \Omega \to \R^d$ and by $\LdzF{} := \LdzF{T}$.

We write $\LdzF{t,+} = \{X \in \LdzF{t}: \; X \in \R^d_+ \; \P\text{-}\as\}$ for the closed convex cone of $\R^d$-valued $\Ft{t}$-measurable
random vectors with non-negative components.  Note that an element $X \in \LdzF{t}$ has components $X_1,...,X_d$ in
$\LzF{t} = \Lp{0}{\Omega,\mathcal F_{t},\P;\R}$. More generally, we denote by $\LdpK{0}{t}{D_t}$ those random vectors (or variables) in $\LdzF{t}$ (resp. $\LzF{t}$) that take $\P$-a.s.
values in $D_t$.

As in \cite{K99} and discussed in \cite{S04,KS09}, the portfolios in this paper are in `physical units' of an asset rather than the value in a fixed num\'{e}raire.  That is, for a portfolio $X \in \LdzF{t}$, the values of $X_i$ (for $1 \leq i \leq d$) are the number of units of asset $i$ in the portfolio at time $t$.

Let the set of eligible portfolios $M\subseteq \R^d$, i.e. those portfolios which can be used to compensate for the risk of a portfolio, be the same for all times $t$ and states $\omega\in\Omega$.
Let $M$ be a finitely generated linear subspace of $\R^d$ with $M_+:=M\cap\R^d_+ \neq \{0\}$. Denote $M_t :=\LdpK{0}{t}{M}$ 
and $M_{t,+} =\LdpK{0}{t}{M_+}$.  
Of particular interest, especially when dealing with the market extension discussed in section~\ref{sec_market}, is the case where all assets are eligible, i.e., $M = \R^d$.

A conditional risk measure is a function which maps a $d$-dimensional random variable into $\mathcal{P}(M_t;M_{t,+}) := \{D \subseteq M_t: D = D + M_{t,+}\}$, a subset of the power set of $M_t$. 
That is, conditional risk measures map into collections of random vectors in $M_t$.
Conditional risk measures were defined in \cite{FR12} as follows.

A function $R_t: \LdzF{} \to \mathcal{P}(M_t;M_{t,+})$ is a \textbf{\emph{conditional risk measure}} at time $t$ if it is
\begin{enumerate}
\item \label{translative} $M_t$-translative: $\forall m_t \in M_t: R_t(X + m_t) = R_t(X) - m_t$;
\item \label{monotone} $\LdzF{+}$-monotone: $Y-X \in \LdzF{+} \Rightarrow R_t(Y) \supseteq R_t(X)$;
\item \label{finite-0} finite at zero: $\emptyset \neq R_t(0) \neq M_t$.
\end{enumerate}

A conditional risk measure $R_t$ is called \textbf{\emph{normalized}} if  \[R_t(X) = R_t(X) + R_t(0)\] holds for every $X \in \LdzF{t}$.
It is \textbf{\emph{local}} if for every $X \in \LdzF{}$ and every $A \in \Ft{t}$ it holds \[1_A R_t(X) = 1_A R_t(1_A X),\] where $1_A$ is the indicator function. 
A conditional risk measure at time $t$ is called \textbf{\emph{conditionally convex}} if for all $X,Y \in \LdzF{}$, and for all $\lambda \in \LpK{\infty}{t}{[0,1]}$
\[R_t(\lambda X + (1-\lambda)Y) \supseteq \lambda R_t(X) + (1-\lambda)R_t(Y),\]
it is \textbf{\emph{conditionally positive homogeneous}} if for all $X \in \LdzF{}$ and for all $\lambda \in \LpK{\infty}{t}{\R_{++}}$
\[R_t(\lambda X) = \lambda R_t(X),\] and it is \textbf{\emph{conditionally coherent}} if it is conditionally convex and conditionally positive homogeneous.
A conditional risk measure $R_t$ is \textbf{\emph{closed}} if $\operatorname{graph}(R_t) := \{(X,u) \in \LdzF{} \times M_t: u \in R_t(X)\}$ is closed in the product topology.
It is called \textbf{\emph{(conditionally) convex upper continuous}} if \[R_t^{-1}(D) := \lrcurly{X \in \LdzF{}: R_t(X) \cap D
\neq \emptyset}\] is closed for any closed (conditionally) convex set $D \in \mathcal{G}(M_t;M_{t,-}) := \{D \subseteq M_t: D = \cl\co(D + M_{t,-})\}$.

A \textbf{\emph{dynamic risk measure}} is a sequence of conditional risk measures $\seq{R}$. It is said to have one of the above properties if $R_t$ has this property for every $t \in \T$.
The \textbf{\emph{acceptance set}} at time $t$ associated with a conditional risk measure $R_t$ is defined by \[A_t = \lrcurly{X \in \LdzF{}: 0 \in R_t(X)}.\]

Let us now discuss the economic interpretation of a  dynamic risk measure $\seq{R}$. At time $t=0$, one would choose a portfolio $u_0\in R_0(X)$, usually to be as small as possible, that is, a weakly minimal element of the set $R_0(X)$, to be put aside to make $X$ acceptable at time $T$ according to the acceptance set $A_0$. As time progresses to $t=1$ and new information become available, one would update this risk compensating portfolio to keep the overall position acceptable (now according to $A_1$) by choosing a portfolio $u_1\in-u_0+ R_1(X)$, again usually a weakly minimal element of this set. This could mean injecting more capital/assets or extracting them. Now in total $u_0+u_1\in R_1(X)$ has been put aside to compensate the risk of $X$. This procedure continues until time $T$. Thus, at any time $t\in\T$ one has put aside a portfolio that compensates for the risk of $X$ according to the time $t$ risk measure $R_t$.

To ensure that updating the risk evaluation is done in a time consistent way, the concept of time consistency for scalar risk measures was extended to the set-valued framework in \cite{FR12} and is called multi-portfolio time consistency.
\begin{definition}
\label{defn_mptc}
A dynamic risk measure $\seq{R}$ is called multi-portfolio time consistent if for all times $t \in \T \backslash \{T\}$, all portfolios $X \in \LdzF{}$, and all sets ${\bf Y} \subseteq \LdzF{}$ the implication
\begin{equation}
R_{t+1}(X) \subseteq   \bigcup_{Y \in {\bf Y}} R_{t+1}(Y) \Rightarrow R_t(X) \subseteq \bigcup_{Y \in {\bf Y}} R_t(Y)
\end{equation}
is satisfied.
\end{definition}

In \cite[theorem 3.4]{FR12}, it was shown that
for a normalized dynamic risk measure $\seq{R}$, with $R_t: \LdzF{} \to \mathcal{P}(M_t;M_{t,+})$ for all times $t$,  $\seq{R}$ being multi-portfolio time consistent is equivalent to
$\seq{R}$ being recursive, that is for every time $t \in \T \backslash \{T\}$
\begin{equation}
\label{recursive}
R_t(X) =  \bigcup_{Z \in R_{t+1}(X)} R_{t,t+1}(-Z)=:R_{t,t+1}(-R_{t+1}(X)),
\end{equation}
where $R_{t,t+1}: M_{t+1} \to \mathcal{P}(M_t;M_{t,+})$ denotes the stepped risk measure $R_{t,t+1}=R_t|_{M_{t+1}}$, that is the restriction of $R_t$ to $M_{t+1}$.

Given an arbitrary dynamic risk measure $\seq{R}$ on $\LdzF{}$, one can compose the one-stepped risk measures $R_{t,t+1}$ backwards in time to obtain a multi-portfolio time consistent risk measure $\seq{\tilde{R}}$ as follows: For all $X \in \LdzF{}$ define
\begin{align}
\label{eqn_composed_final} \tilde{R}_{T}(X) & = R_{T}(X),\\
\label{eqn_composed} \forall t \in \lrcurly{0,1,...,T-1}: \quad \tilde{R}_t(X) & = \bigcup_{Z \in \tilde{R}_{t+1}(X)} R_{t,t+1}(-Z).
\end{align}
Then, $\seq{\tilde{R}}$ is multi-portfolio time consistent and satisfies the properties of $M_t$-translativity and monotonicity of a dynamic risk measures, but may fail to be finite at zero.  Additionally, if $\seq{R}$ is (conditionally) convex, (conditionally) coherent, or (conditionally) convex upper continuous and (conditionally) convex,
then $\seq{\tilde{R}}$ has the same property, see proposition 3.11 in~\cite{FR12} and proposition~5.1 in~\cite{FR12b}.

\begin{example}
\label{sec_avar}
The set-valued average value at risk was introduced and computed in~\cite{HRY12} in the static setting.  In~\cite{FR12} the definition was extended to the
dynamic framework as follows. For parameters $\lambda^t \in \LdzF{t}$, where $0 < \lambda^t_i < 1$ and $X\in \LdzF{}$ define
\begin{equation*}
AV@R_t^{\lambda}(X) = \{\diag{\lambda_t}^{-1}  \Et{Z}{t} - z :
    Z \in\LdzF{+}, \; X + Z - z \in \LdzF{+}, \; z \in \LdzF{t}\} \cap M_t.
\end{equation*}
$\seq{AV@R^{\lambda}}$ is a normalized closed conditionally coherent dynamic risk measure. The multi-portfolio time consistent version
$\widetilde{AV@R}_t^\lambda$ was discussed in~\cite{FR12b}, with the dual representation deduced for the case $M = \R^d$.  Since we are
only considering a finite probability space, we can immediately conclude (as in~\cite{HRY12})
that the dynamic average value at risk is a polyhedral risk measure.  
\end{example}

\section{A set-valued Bellman's principle}
\label{sec_alg}

We now want to answer the question if it is possible to use the nested formulation \eqref{recursive}, or, more generally, the backward composition \eqref{eqn_composed_final}, \eqref{eqn_composed} to explicitly calculate the set $R_t(X)$, respectively the multi-portfolio time consistent version $\tilde{R}_t(X)$, backwards in time. If this is possible it would justify calling this procedure a {\bf set-valued Bellman's principle}, yielding a dynamic programming method for set-valued functions. This would be an interesting insight in itself within the field of set-optimization with applications beyond the one considered here.

Recall that we assumed a finite sample space $\Omega$ with the power set sigma algebra, i.e. $\Ft{} = 2^{\Omega}$.  We define $\Omega_t$ as the set of atoms in $\Ft{t}$. For any $\omega_t \in \Omega_t$ we denote the successor nodes by
\[\succ(\omega_t) = \{\omega_{t+1} \in \Omega_{t+1}: \omega_{t+1} \subseteq \omega_t\}.\]
We use the convention that for an $\Ft{t}$-measurable random variable $u$, we denote by $u(\omega_t)$ the value of $u$ at node $\omega_t$, that is $u(\omega_t):=u(\omega)$ for some $\omega\in\omega_t$.
Further, we denote by $R_t(X)[\omega_t] := \{u(\omega_t): u \in R_t(X)\}$ the collection of projections of elements of $R_t(X)$
onto $\omega_t$. Though $R_t(X)$ is a collection of random variables rather than a random set, $R_t(X)[\cdot]$ is a random set.

In order to study a possible calculation of a set of random variables $\tilde{R}_t(X)$ backwards in time on a finite event tree, one first needs to check if one can calculate the set $\tilde{R}_t(X)$ $\omega_t$-wise at each node. That is, we wish to verify that 
\begin{align*}
	u\in \tilde R_t(X) \; \iff \; u(\omega_t)\in \tilde R_t(X)[\omega_t] \quad\forall \omega_t\in\Omega_t.
\end{align*}
This consideration is not a concern when dealing with scalar risk measures, but in the set-valued case certain conditions are needed to ensure it.
Since we work on a finite probability space, for an $\omega_t$-wise approach to \eqref{eqn_composed_final},
\eqref{eqn_composed} one will need $\seq{\tilde R}$ to have $\Ft{t}$-decomposable images, 
i.e., \[\tilde R_t(X) = 1_A \tilde R_t(X) + 1_{A^c} \tilde R_t(X) \quad \forall X \in \LdzF{} \; \forall A \in \Ft{t},\]
which is satisfied if $\seq{\tilde R}$ has 
conditionally convex images.
Furthermore, for the multi-portfolio time consistent version to only depend on the possible future (successor) states, one will
need $\seq{R}$ to be local.
\begin{remark}
Assuming $\seq{R}$ to be conditionally convex implies both, $\seq{R}$ being local (see proposition~2.8 in \cite{FR12}), as well as $\seq{\tilde R}$ being conditionally convex (proposition 3.11 in~\cite{FR12}) and thus having conditionally convex images.
\end{remark}
In order to implement the set optimization problem, we wish to reframe the backward composition~\eqref{eqn_composed_final}, \eqref{eqn_composed} as a vector optimization problem.  For this reason we require that $\seq{\tilde R}$ has closed images as well.
\begin{remark}
It is challenging to ensure that  $\seq{\tilde R}$ has closed images.
Let us give three examples, where $\seq{\tilde R}$ is closed (and thus has closed images): a)  if the dynamic risk measure $\seq{R}$ is 
convex upper continuous and convex (see proposition~5.1 in~\cite{FR12b}), or b) if the dynamic risk measure $\seq{R}$ is conditionally convex upper continuous and conditionally convex, or c) if the dynamic risk measure $\seq{R}$ is polyhedral (that is if $\operatorname{graph}(R_t)$ is a convex polyhedron, i.e. the intersection of finitely many closed half-spaces).
\end{remark}

If $\seq{\tilde{R}}$ defined in \eqref{eqn_composed_final}, \eqref{eqn_composed} is conditionally convex, but does not already have closed 
images, 
one needs to consider its closed-valued version, i.e.
\begin{align}
\label{T}
\bar{R}_T(X) &:= \cl (R_T(X))\\
\label{t}
\forall t \in \lrcurly{0,1,...,T-1}: \quad  \bar{R}_t(X) &:= \cl \bigcup_{Z \in \bar{R}_{t+1}(X)} R_t(-Z)
\end{align}
for all portfolios $X \in \LdzF{}$.
We will show that $\seq{\tilde{R}}$ can be approximated for arbitrarily small $\delta > 0$ by $\seq{\bar{R}}$ that admits an $\omega_t$-wise representation. The approximation is understood in the following sense.
$\seq{\bar{R}}$ is called an \textbf{\emph{approximation}} of $\seq{\tilde{R}}$ if
\[
	\bar{R}_t(X) + \delta m \1  \subseteq \tilde{R}_t(X) \subseteq \bar{R}_t(X)
\]
for any time $t$, for every portfolio $X \in \LdzF{}$, for any approximation tolerance $\delta > 0$ and any $m \in \interior(M_+)$ in
the subspace topology, i.e. with the topology $\tau_M := \{A \cap M: A \in \tau\}$ where $\tau$ is the topology on $\R^d$.
Thus, if $\seq{\tilde R}$ does not have closed images, we will use construction \eqref{T}, \eqref{t} to calculate an approximation $\seq{\bar{R}}$ of $\seq{\tilde{R}}$.

\begin{theorem}
\label{thm_alg2}
Let $\seq{R}$ be a conditionally convex dynamic risk measure. 
Let $\seq{\tilde{R}}$  denote its multi-portfolio
time consistent version as defined in \eqref{eqn_composed_final}, \eqref{eqn_composed}.  Then, we can calculate an approximation $\seq{\bar{R}}$ of $\seq{\tilde{R}}$ in an $\omega_t$-wise manner by
\begin{align}
\label{eq_alg-approx-T}
\bar{R}_T(X)[\omega_T] &= \cl (R_T(X)[\omega_T]),\quad \forall\omega_T \in \Omega_T;\\
\label{eq_alg-approx-t}
\bar{R}_t(X)[\omega_t] =& \cl \bigcup \lrcurly{R_{t,t+1}(-Z)[\omega_t]: Z(\omega_{t+1}) \in
    \bar{R}_{t+1}(X)[\omega_{t+1}] \; \forall \omega_{t+1} \in \succ(\omega_t)}
\end{align}
for every state $\omega_t \in \Omega_t$ and all times $t \in \T \backslash \{T\}$.
\end{theorem}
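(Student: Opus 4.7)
The plan is to proceed by backward induction on $t$, showing that the set $\bar{R}_t(X)$ defined in \eqref{T}, \eqref{t} admits the $\omega_t$-wise representation \eqref{eq_alg-approx-T}, \eqref{eq_alg-approx-t}. Two standing ingredients, both flagged in the remarks preceding the statement, will be used throughout: (i) locality of $R_t$, a consequence of conditional convexity via proposition~2.8 of~\cite{FR12}, so that the restriction of $R_t(-Z)$ to $\omega_t$ depends only on the values $Z(\omega_{t+1})$ for $\omega_{t+1} \in \succ(\omega_t)$; and (ii) $\Ft{t+1}$-decomposability of the images of $\bar{R}_{t+1}$, which follows because $\seq{\tilde{R}}$ inherits conditional convexity from $\seq{R}$ by proposition~3.11 of~\cite{FR12}, and this property is preserved by taking closures. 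Because $\Omega$ is finite, $M_t$ can be identified with the product $M^{|\Omega_t|}$, and decomposability implies $\bar{R}_t(X) = \prod_{\omega_t \in \Omega_t} \bar{R}_t(X)[\omega_t]$; the base case $t=T$ is then immediate since closure in this finite product coincides with the product of closures.

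For the inductive step, I fix $\omega_t \in \Omega_t$ and assume the representation at time $t+1$. The inclusion $\subseteq$ is the direct one: any $u \in \bar{R}_t(X)$ is a limit of elements $u_n \in R_{t,t+1}(-Z_n)$ with $Z_n \in \bar{R}_{t+1}(X)$, and evaluating the inductive hypothesis node-by-node gives $Z_n(\omega_{t+1}) \in \bar{R}_{t+1}(X)[\omega_{t+1}]$ for every $\omega_{t+1} \in \succ(\omega_t)$; continuity of coordinate projection then places $u(\omega_t)$ inside the closed union on the right-hand side of \eqref{eq_alg-approx-t}. For the reverse inclusion $\supseteq$, I take $v$ in that right-hand side and an approximating sequence $v_n \in R_{t,t+1}(-Z_n)[\omega_t]$ with $Z_n(\omega_{t+1}) \in \bar{R}_{t+1}(X)[\omega_{t+1}]$ for all $\omega_{t+1} \in \succ(\omega_t)$. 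Using $\Ft{t+1}$-decomposability of $\bar R_{t+1}(X)$, I glue these values with arbitrary selections from $\bar{R}_{t+1}(X)[\omega_{t+1}]$ for $\omega_{t+1} \notin \succ(\omega_t)$ to obtain $\tilde{Z}_n \in \bar{R}_{t+1}(X)$; locality of $R_t$ then yields $R_{t,t+1}(-\tilde{Z}_n)[\omega_t] = R_{t,t+1}(-Z_n)[\omega_t]$, so some $u_n \in R_{t,t+1}(-\tilde{Z}_n) \subseteq \bar{R}_t(X)$ satisfies $u_n(\omega_t) = v_n$. Passing to the limit and using that $\bar{R}_t(X)[\omega_t]$ is closed (from decomposability and closedness of $\bar{R}_t(X)$ in the finite product) completes the argument.

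The main obstacle will be the $\supseteq$ direction, whose two lifts both hinge on decomposability: first a node-wise selection on $\succ(\omega_t)$ must be extended to a globally $\Ft{t+1}$-measurable element of $\bar{R}_{t+1}(X)$, and then an $\omega_t$-local value must be realised as the projection of some global element of $R_t(-\tilde{Z}_n)$. A subsidiary concern is that $\bar{R}_{t+1}(X)[\omega_{t+1}]$ can be empty for some atom; locality neutralises this issue away from $\succ(\omega_t)$, and if the set is empty for some $\omega_{t+1} \in \succ(\omega_t)$ then both sides of \eqref{eq_alg-approx-t} are empty and the claim is trivial. That $\seq{\bar{R}}$ is indeed an approximation of $\seq{\tilde{R}}$ in the sense specified reduces to noting $\tilde{R}_t(X) \subseteq \bar{R}_t(X)$ by construction, together with the fact that for a closed conditionally convex set and any $m \in \interior(M_+)$, translation by $\delta m \1$ with $\delta > 0$ pushes $\bar{R}_t(X)$ into the relative interior of $\tilde{R}_t(X)$.
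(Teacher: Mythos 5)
Your argument for the $\omega_t$-wise representation follows essentially the same route as the paper's proof (decomposability of images plus locality of $R_t$), and the gluing argument in your $\supseteq$ direction is sound. However, there is a genuine gap in how you justify the decomposability of $\bar{R}_{t+1}(X)$, and it recurs more seriously in your treatment of the approximation property. Both issues stem from the same conflation: you treat $\bar{R}_t(X)$ as if it were $\cl(\tilde{R}_t(X))$. It is not, at least not by definition. By \eqref{t}, $\bar{R}_t(X) = \cl\bigcup_{Z \in \bar{R}_{t+1}(X)} R_t(-Z)$, where the union ranges over the \emph{already enlarged} set $\bar{R}_{t+1}(X) \supseteq \tilde{R}_{t+1}(X)$, not over $\tilde{R}_{t+1}(X)$. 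So conditional convexity of $\seq{\bar{R}}$ does not follow from ``proposition~3.11 of \cite{FR12} plus closures''; it needs its own backward induction (which is what the paper does: assuming $\bar{R}_{t+1}$ conditionally convex, one pushes $\lambda \bar{R}_t(X) + (1-\lambda)\bar{R}_t(Y)$ through the closed unions using conditional convexity of $R_t$). The conclusion you want is true, but the justification you give does not establish it.

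The gap is more consequential in the final paragraph. The inclusion $\bar{R}_t(X) + \delta m\1 \subseteq \tilde{R}_t(X)$ is not a one-line consequence of ``translation by $\delta m\1$ pushes a closed set into the interior of the set it closes,'' precisely because $\bar{R}_t(X)$ is built from the larger constraint set $\bar{R}_{t+1}(X)$. The actual argument is an induction that propagates the error from level $t+1$ to level $t$: assuming $\bar{R}_{t+1}(X) \subseteq \tilde{R}_{t+1}(X) - \delta_{t+1} m\1$, one uses $M_t$-translativity of $R_t$ to write
\begin{equation*}
\bar{R}_t(X) \subseteq \cl \bigcup_{Z \in \tilde{R}_{t+1}(X)} R_t(-Z + \delta_{t+1} m\1) = \cl\bigcup_{Z \in \tilde{R}_{t+1}(X)} R_t(-Z) - \delta_{t+1} m\1 \subseteq \tilde{R}_t(X) - (\epsilon + \delta_{t+1}) m\1,
\end{equation*}
where the last step uses the fact you cite about removing the outer closure at the cost of an extra $\epsilon m\1$. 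The errors accumulate as $\delta_t = \epsilon + \delta_{t+1}$, which is still arbitrarily small but must be tracked. Without this translativity-based propagation step your proof of the approximation property does not go through; adding it (and the separate induction for conditional convexity of $\seq{\bar{R}}$) would bring your argument in line with the paper's.
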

\begin{proof}
Define $\seq{\bar{R}}$ by equations \eqref{T}, \eqref{t}. Clearly,  $\seq{\bar{R}}$  has closed images.
One can show that $\seq{\bar{R}}$ is conditionally
convex by backward induction. The assertion is clearly true for $\bar{R}_T$. Now assume $\bar{R}_{t+1}$ is conditionally
convex and let $X,Y \in \LdzF{}$ and $\lambda \in \LpK{\infty}{t}{[0,1]}$. Then,
\begin{align*}
\lambda \bar{R}_t(X) &+ (1-\lambda) \bar{R}_t(Y) = \lambda \cl \bigcup_{Z_X \in \bar{R}_{t+1}(X)} R_t(-Z_X) + (1-\lambda) \cl \bigcup_{Z_Y \in \bar{R}_{t+1}(Y)} R_t(-Z_Y)\\
&\subseteq \cl \lrparen{\cl \bigcup_{Z_X \in \bar{R}_{t+1}(X)} \lambda R_t(-Z_X) + \cl \bigcup_{Z_Y \in \bar{R}_{t+1}(Y)} (1-\lambda) R_t(-Z_Y)}\\
&= \cl \bigcup_{\substack{Z_X \in \bar{R}_{t+1}(X)\\ Z_Y \in \bar{R}_{t+1}(Y)}} \lrsquare{\lambda R_t(-Z_X) + (1-\lambda) R_t(-Z_Y)}\\
&\subseteq \cl \bigcup_{\substack{Z_X \in \bar{R}_{t+1}(X)\\ Z_Y \in \bar{R}_{t+1}(Y)}} R_t(-(\lambda Z_X + (1-\lambda) Z_Y))\\
&= \cl \bigcup_{Z \in \lambda \bar{R}_{t+1}(X) + (1-\lambda) \bar{R}_{t+1}(Y)} R_t(-Z)\\
&\subseteq \cl \bigcup_{Z \in \bar{R}_{t+1}(\lambda X + (1-\lambda) Y)} R_t(-Z) = \bar{R}_{t}(\lambda X + (1-\lambda) Y).
\end{align*}
Since the probability space is assumed to be finite, $\Omega_t$ is by definition the finest partition of $\Omega$ in $\Ft{t}$ and $1_{\omega_t}u \in 1_{\omega_t}\bar R_t(X)$ if and only if $u(\omega_t) \in \bar R_t(X)[\omega_t]$.
Since $\seq{\bar{R}}$ has 
conditionally convex images and the probability space is finite, it follows that $u \in \bar{R}_t(X)$ if and only if $u(\omega_t) \in \bar{R}_t(X)[\omega_t]$ for every $\omega_t \in \Omega_t$. Thus, one can calculate $\bar{R}_t(X)$ $\omega_t$-wise.
Therefore the terminal condition \eqref{eq_alg-approx-T} holds trivially.
Let $t \in \T \backslash \{T\}$ and $\omega_t \in \Omega_t$, using \eqref{t} and the local property for $R_t$ (see~\cite[proposition 2.8]{FR12}) it follows that
\begin{align*}
1_{\omega_t} \bar{R}_t(X) &= 1_{\omega_t} \cl \bigcup_{Z \in \bar{R}_{t+1}(X)} R_t(-Z)\\ 
&= \cl\bigcup_{Z \in\bar{R}_{t+1}(X)} 1_{\omega_t} R_t(1_{\omega_t}(-Z)) = \cl\bigcup_{Z \in 1_{\omega_t}\bar{R}_{t+1}(X)} 1_{\omega_t}R_t(-Z).
\end{align*}
Note that $Z \in 1_{\omega_t}\bar{R}_{t+1}(X)$ if and only if $Z(\omega_{t+1}) \in\bar{R}_{t+1}(X)[\omega_{t+1}]$ for every
$\omega_{t+1} \in \succ(\omega_t)$ and $Z(\omega_{t+1}) = 0$ otherwise.  But as we only need to consider $1_{\omega_t}Z$ by $R_t$ local, the only constraints on $Z$ are imposed by $\omega_{t+1} \in \succ(\omega_t)$.  Thus, \eqref{eq_alg-approx-t} follows.

Finally we will show that $\bar{R}_t$ is an approximation of $\tilde{R}_t$.  By definition, $\bar{R}_{T}(X) := \cl (\tilde{R}_{T}(X))$ for all $X \in \LdzF{}$, which implies that $\bar{R}_{T}(X) + \delta_{T} m \1 \subseteq \tilde{R}_{T}(X) \subseteq \bar{R}_{T}(X)$ for any portfolio $X \in \LdzF{}$, for any $\delta_{T} > 0$, and any $m \in \interior(M_+)$.  For the proof by induction, assume that $\bar{R}_{t+1}$ is an approximation of $\tilde{R}_{t+1}$. Then  for $t \leq T-1$ we have
\begin{align*}
\tilde{R}_t(X) &= \bigcup_{Z \in \tilde{R}_{t+1}(X)} R_t(-Z) \subseteq \cl \bigcup_{\bar{Z} \in \bar{R}_{t+1}(X)} R_t(-\bar{Z}) = \bar{R}_t(X)\\
&\subseteq \cl \bigcup_{Z \in \tilde{R}_{t+1}(X)} R_t(-Z + \delta_{t+1}m \1 ) = \cl \bigcup_{Z \in \tilde{R}_{t+1}(X)} R_t(-Z) - \delta_{t+1} m \1 \\
&\subseteq \bigcup_{Z \in \tilde{R}_{t+1}(X)} R_t(-Z) - (\epsilon + \delta_{t+1})m \1 = \tilde{R}_t(X) - (\epsilon + \delta_{t+1}) m\1
\end{align*}
for any $\epsilon,\delta_{t+1} > 0$, and any $m \in \interior(M_+)$. The first inclusion on the second line follows from the induction hypothesis.
The first inclusion on the third line follows since $\cl (\tilde{R}_t(X)) \subseteq \tilde{R}_t(X) - \epsilon m$ for any $\epsilon > 0$
and all $X \in \LdzF{}$. Denote $\delta_t := \epsilon + \delta_{t+1} > 0$.   Therefore for any time $t$ and any $\delta_t > 0$ we have
that $\bar{R}_t(X) + \delta_t m\1 \subseteq \tilde{R}_t(X) \subseteq \bar{R}_t(X)$.
\end{proof}

In the recursion \eqref{eq_alg-approx-t}, the calculation is dependent on $R_{t,t+1}(-Z)[\omega_t]$ which is by locality of $R_t$ equal to $R_{t,t+1}(-1_{\omega_t}Z)[\omega_t]$, i.e. $R_{t,t+1}(-Z)[\omega_t]$ only depends on the part of $Z$ that can be attained from state $\omega_t$.
For a local risk measure, we can therefore define $R_{t,t+1}(\cdot)[\omega_t]$ on $M_{t+1}[\omega_t] := \Lp{0}{\cup\succ(\omega_t),2^{\succ(\omega_t)},\P(\cdot | \omega_t);M}$
(the equivalence class of $\Ft{t+1}$-measurable random variables in the eligible space $M$ starting from node $\omega_t$)
by $R_{t,t+1}(Z)[\omega_t] := R_{t,t+1}(\hat{Z})[\omega_t]$ for
$Z \in M_{t+1}[\omega_t]$ and some $\hat{Z} \in M_{t+1}$ with $Z(\omega_{t+1}) =
\hat{Z}(\omega_{t+1})$ for every $\omega_{t+1} \in \succ(\omega_t)$.

\begin{remark}
If $\seq{\tilde{R}}$ defined in \eqref{eqn_composed_final}, \eqref{eqn_composed} does have closed images already, $\seq{\tilde{R}}$ coincides with $\seq{\bar{R}}$ and thus can be calculated in an $\omega_t$-wise manner by \eqref{eq_alg-approx-T},~\eqref{eq_alg-approx-t}.
\end{remark}

Observe that \eqref{t} is a set-valued optimization problem in the complete lattice $\mathcal{G}(M_t;M_{t,+}):= \{D \subseteq M_t: D = \cl\co(D + M_{t,+})\}$,
and \eqref{eq_alg-approx-t} is a set-valued optimization problem in the complete lattice $\mathcal{G}(M;M_+)$.  As such, while the initial proposed problem \eqref{t} requires a lattice in sets of random vectors, the $\omega_t$-wise representation allows us to consider the lattice in sets of real-valued vectors instead.  That is, the objective
function $R_{t,t+1}$ at node $\omega_t$ is a set-valued function that is minimized over the constraint set $\bar{\ZZ}_t := \{Z\in M_{t+1}[\omega_t]:  Z(\omega_{t+1}) \in \bar{R}_{t+1}(X)[\omega_{t+1}] \; \forall \omega_{t+1} \in \succ(\omega_t)\}$:
\begin{align}
\label{setOP}
	\bar{R}_t(X)[\omega_t] =\cl\bigcup_{Z\in \bar{\ZZ}_t} R_{t,t+1}(-Z)[\omega_t]=\inf_{Z\in \bar{\ZZ}_t} R_{t,t+1}(-Z)[\omega_t].
\end{align}
Recall from \cite{L11} that the (lattice) infimum over a function $f: M_{t+1}[\omega_t] \to\mathcal{G}(M;M_+)$ is given by
$\inf_{X\in\XX} f(X)=\cl\co\bigcup_{X\in\XX} f(X)$ for $\XX\subseteq M_{t+1}[\omega_t]$, where the convex hull in front of the union can be dropped here as $\bar{R}_t(X)[\omega_t]$ is convex by theorem~\ref{thm_alg2}.
Using the idea of \cite{LS13}, one can transform the set-valued problem \eqref{setOP} into a linear vector optimization problem that can be solved e.g. by Benson's algorithm if the risk measure $\seq{R}$ is polyhedral, i.e., the graph of $R_t$ is a convex polyhedron. More generally, if $\seq{R}$ is the upper image of a convex vector optimization problem, one can transform the set-valued problem \eqref{setOP} into a convex vector optimization problem that can be approximately solved by the algorithms discussed in \cite{LRU13}. In both cases, one uses that the value of the set-optimization problem \eqref{setOP} can be written as the value of a vector optimization problem
\begin{align}
\label{VOP}
	\bar{R}_t(X)[\omega_t] =\inf_{(Z,Y)\in\ZZ_t} \Phi_t(Z,Y)
\end{align}
for the linear vector-valued function $ \Phi_t(Z,Y)=\{Y\}$, feasible set
\begin{align*}
	\ZZ_t=\{(Z,Y)&\in M_{t+1}[\omega_t]\times M: \;\\
         &Y\in R_{t,t+1}(-Z)[\omega_t], Z(\omega_{t+1}) \in \bar{R}_{t+1}(X)[\omega_{t+1}] \; \forall \omega_{t+1} \in \succ(\omega_t) \}
\end{align*}
and ordering cone $M_+$.
Let $\Phi_t(\ZZ_t)=\{\Phi_t(Z): Z\in\ZZ_t\}$ denote the image of the feasible set. The set $\cl(\Phi_t(\ZZ_t)+M_+)$ is called the upper image of the vector optimization problem \eqref{VOP}.
We now discuss the constraints by looking at two cases: the polyhedral case and the convex case.

\section{Computational procedure}
\label{sec_computation}
\subsection{Linear vector optimization and polyhedral risk measures}
\label{sec_polyhedral}

Recall that a risk measure $R_t$ is polyhedral if its graph is a convex polyhedron, i.e. the intersection of finitely many closed half-spaces. It is also equivalent to $R_t$ having a polyhedral acceptance set. For a polyhedral risk measure $\seq{R}$, problem \eqref{VOP} is a linear vector optimization problem.

\begin{proposition}
\label{propoLVOP}
If the dynamic risk measure $\seq{R}$ is conditionally convex and polyhedral, its multi-portfolio
time consistent version $\seq{\tilde{R}}$, defined in \eqref{eqn_composed_final}, \eqref{eqn_composed}, can be calculated $\omega_t$-wise, where in each node $\omega_t\in\Omega_t$, $t\in\T \backslash \{T\}$, the linear vector optimization problem  \eqref{VOP} has to be solved.
\end{proposition}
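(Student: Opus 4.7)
The plan is to combine the remark preceding Theorem~\ref{thm_alg2} with a backward induction on $t$ that upgrades the approximation scheme \eqref{eq_alg-approx-T}, \eqref{eq_alg-approx-t} to an exact node-wise recursion realized as a linear vector optimization problem.

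First I would observe that the proposition's polyhedrality hypothesis puts us in case c) of the remark preceding Theorem~\ref{thm_alg2}: polyhedrality of $\seq{R}$ forces $\seq{\tilde R}$ to have closed images, so $\seq{\tilde R}=\seq{\bar R}$ and the approximation tolerance in Theorem~\ref{thm_alg2} can be taken to be zero. Together with the conditional convexity assumption (which was needed to get the conditionally convex images, hence the $\omega_t$-wise decomposability), this already yields the $\omega_t$-wise recursion
\[
\tilde R_t(X)[\omega_t]=\cl\bigcup\lrcurly{R_{t,t+1}(-Z)[\omega_t]:Z(\omega_{t+1})\in\tilde R_{t+1}(X)[\omega_{t+1}]\;\forall\omega_{t+1}\in\succ(\omega_t)}
\]
for every node $\omega_t$ and every $t\in\T\setminus\{T\}$.

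Next I would prove by backward induction on $t$ that at every node $\omega_t$ the set $\tilde R_t(X)[\omega_t]$ is a convex polyhedron and coincides with the upper image $\cl(\Phi_t(\ZZ_t)+M_+)$ of the vector optimization problem \eqref{VOP}. The base case $t=T$ is immediate: $\tilde R_T(X)[\omega_T]$ is a slice of the polyhedral graph of $R_T$ with $X$ fixed, hence polyhedral. For the inductive step, assume that $\tilde R_{t+1}(X)[\omega_{t+1}]$ is polyhedral for every $\omega_{t+1}\in\succ(\omega_t)$. Then the feasible set
\begin{align*}
\ZZ_t=\{(Z,Y)\in M_{t+1}[\omega_t]\times M:\;&Y\in R_{t,t+1}(-Z)[\omega_t],\\
&Z(\omega_{t+1})\in\tilde R_{t+1}(X)[\omega_{t+1}]\;\forall\omega_{t+1}\in\succ(\omega_t)\}
\end{align*}
is the intersection, in a finite-dimensional real space, of two polyhedra: the first arising from the polyhedral graph of $R_{t,t+1}$ after the affine change $Z\mapsto -Z$, and the second from the inductive polyhedrality of $\tilde R_{t+1}(X)[\omega_{t+1}]$ for each of the finitely many successor nodes. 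Hence $\ZZ_t$ is polyhedral, the objective $\Phi_t(Z,Y)=\{Y\}$ is linear, and \eqref{VOP} is genuinely a linear vector optimization problem. Its upper image $\cl(\Phi_t(\ZZ_t)+M_+)$ is polyhedral by the classical fact that the linear image of a polyhedron plus a polyhedral cone is polyhedral.

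To close the induction, I would identify this upper image with $\tilde R_t(X)[\omega_t]$. Using \eqref{setOP} together with $\tilde R_t(X)[\omega_t]+M_+\subseteq\tilde R_t(X)[\omega_t]$ (which comes from $R_t$ mapping into $\mathcal{P}(M_t;M_{t,+})$) and the convexity established in Theorem~\ref{thm_alg2}, the closed convex hull with $M_+$ added is redundant on the left-hand side, and
\[
\tilde R_t(X)[\omega_t]=\cl\bigl(\Phi_t(\ZZ_t)+M_+\bigr),
\]
which is exactly the upper image of \eqref{VOP}. The main obstacle is the inductive preservation of polyhedrality through an infinite (continuum-parameterized) union of polyhedra; this is precisely what the reformulation as a linear vector optimization problem and the classical polyhedrality of its upper image take care of, and this is the step where the solvability by Benson's algorithm enters.
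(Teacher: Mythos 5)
Your proposal is correct and follows essentially the same route as the paper's proof: the $\omega_t$-wise recursion from Theorem~\ref{thm_alg2}, the observation that polyhedrality makes $\seq{\tilde{R}}$ closed so that $\seq{\bar{R}}=\seq{\tilde{R}}$, and a backward induction showing that at each node the feasible set of \eqref{VOP} is polyhedral and its upper image equals $\tilde{R}_t(X)[\omega_t]$. The only cosmetic difference is that the paper substantiates the closedness of $\seq{\tilde{R}}$ via the acceptance-set identity $\tilde{A}_t = A_{t,t+1}+\tilde{A}_{t+1}$ rather than by citing the remark, but your node-wise induction recovers the same fact.
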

\begin{proof} The $\omega_t$-wise representation follows from theorem~\ref{thm_alg2}.
Now, let us show that problem \eqref{VOP} is a linear vector optimization problem.
By $\seq{R}$ polyhedral and since $R_{t,t+1}[\omega_t]$ maps into $\mathcal{G}(M;M_+)$, $R_{t,t+1}(-Z)[\omega_t]$ is the upper image of a linear vector optimization problem (see remark 5.1 in \cite{HLR13}), thus
\begin{align}
\label{ui}
	R_{t,t+1}(-Z)[\omega_t]=\{P_t(x)+M_+: B_tx\geq b_t\}
\end{align}	
for a vector $x=\transp{Z,z}$ that might include some auxiliary variable $z$, and for matrices $P_t$ and $B_t$ and vectors $b_t$ of appropriate dimensions. 
Then, the constraints $Y\in R_{t,t+1}(-Z)[\omega_t]$ can be equivalently written as
\[
(-Z,Y)\in\operatorname{graph} R_{t,t+1}[\omega_t] \; \iff \; \trans{\hat M}(Y-P_t(x))\geq 0,\; \; B_tx\geq b_t,
\]
where the matrix $\hat M$ contains the generating vectors of the positive dual $M_+^+$ of the ordering cone $M_+$.
Thus, these constraints are linear. To obtain linearity of the other constraints, note that
$\seq{R}$ polyhedral implies $\seq{R}$ closed (by definition) and $\seq{\tilde{R}}$ closed. To see the last implication observe
that $R_t$ is polyhedral if and only if $A_t$ is a polyhedron.  The acceptance set of $\tilde{R}_t$ is given by $\tilde{A}_t =
A_{t,t+1} + \tilde{A}_{t+1}$, see \cite[corollary~3.14]{FR12}.  $A_{t,t+1}$ is a polyhedron since $R_{t,t+1}$ is polyhedral, by
backwards recursion we assume $\tilde{A}_{t+1}$ is a polyhedron, and the sum of polyhedra is a polyhedron.  Therefore $\tilde{A}_t$ is a polyhedron, which is equivalent to $\tilde{R}_t$ being polyhedral (and thus closed as well).
Thus, in the polyhedral case,  $\seq{\bar{R}}$ coincides with $\seq{\tilde{R}}$. The linearity of the constraints $Z\in\tilde R_{t+1}(X)$ follow by induction.
The constraints from the terminal condition $Z(\omega_T)\in\tilde R_{T}(X)[\omega_T]=R_{T}(X)[\omega_T]$ are linear for all $\omega_T\in\Omega_T$ by $R_T$ polyhedral. Thus, let us assume the constraints $Z(\omega_{t+1}) \in \tilde{R}_{t+1}(X)[\omega_{t+1}]$ are linear for all $\omega_{t+1} \in \succ(\omega_t)$ for a given node $\omega_t\in\Omega_t$, then we need to show that $\tilde{R}_t(X)[\omega_t]$ is polyhedral. Since $\Omega$ is assumed to be finite, problem \eqref{VOP} is clearly a linear vector optimization problem. By \eqref{setOP},~\eqref{VOP} and $R_{t,t+1}(-Z)[\omega_t]$ mapping into $\mathcal{G}(M;M_+)$, we have  $\tilde{R}_t(X)[\omega_t] =\Phi_t(\ZZ_t)+M_+$, which is for finite $\Omega$ closed and polyhedral. Thus, $\tilde{R}_t(X)[\omega_t] $ is the upper image of the linear vector optimization problem \eqref{VOP}.
\end{proof}

Thus, if one assumes $\seq{R}$ to be a conditionally convex and polyhedral risk measure, $\tilde{R}_t(X)[\omega_t]$ is for every $t$ and $\omega_t$ the upper image of a linear vector optimization problem, which is polyhedral and can be calculated by Benson's algorithm (see \cite{HLR13}).
Then, the set $\tilde{R}_t(X)$ can be calculated backwards in time, by solving at each node a linear vector optimization problem. Note that at time $t$, the problems for each $\omega_t\in\Omega_t$ can be calculated in parallel instead of sequentially which reduces computational time. Several examples will be discussed in section~\ref{sec_ex}.

Benson's algorithm is an appropriate tool to solve the vector optimization problem \eqref{VOP} as it takes advantage of the fact that the dimension $\dim(M)$ of the image space is usually significantly smaller than the dimension $d\times|\succ(\omega_t)|+d+|z|$ of the pre-image space, where $|\succ(\omega_t)|$ denotes the number of successor nodes of $\omega_t$ and $|z|$ denotes the dimension of the auxiliary variables in~\eqref{ui}.

In practice (especially if $M$ is higher dimensional), when the number of vertices of the set $\tilde{R}_t(X)[\omega_t]$  is very high, one would calculate an approximation of $\tilde{R}_t(X)[\omega_t]$ having fewer vertices, see remark 4.10 in \cite{HLR13}. Then, for the backward recursion, one would need to know how the approximation errors accumulate over time. This will be discussed in propositions~\ref{propo3.5} and ~\ref{prop_recursive-approxa} below in a more general framework.

\begin{example}
\label{sec_rwc}
The relaxed worst case risk measure was introduced in the static framework in example 5.2 of~\cite{HLR13}.  The idea behind the relaxed worst case risk measure is to modify the worst case risk measure so that portfolios with ``small'' negative components can still be acceptable.

In this paper we consider the dynamic extension of such a risk measure.  By definition it is a polyhedral and conditionally
convex, but not conditionally coherent, risk measure; the acceptance set at time $t$ is given by
\[A_t^{RWC} = (-\varepsilon + \LdzF{+}) \cap \LdpK{0}{}{G}\]
for some level $\varepsilon \in \R^d_+$ and some finitely generated convex cone $G \supseteq \R^d_+$ and $G \neq \R^d$.  Note that if $G = \R^d_+$ or $\varepsilon = 0$ then the relaxed worst case risk measure is equivalent to the worst case risk measure.

Let $\seq{R}$ be the relaxed worst case risk measure with polyhedral acceptance set $A_t^{RWC}$. Then, by proposition~\ref{propoLVOP} one can
calculate its multi-portfolio time consistent version $\seq{\tilde R}$ $\omega_t$-wise, where in each node $\omega_t\in\Omega_t$,
$t\in\T \backslash \{T\}$, the linear vector optimization problem \eqref{VOP} has to be solved.  It should be noted that in general $R_t
\neq \tilde{R}_t$, i.e. the relaxed worst case risk measure is not multi-portfolio time consistent.
\end{example}

\subsection{Convex vector optimization and conditionally convex risk measures}
\label{sec_convex}

As the upper image of a convex vector optimization problem can only be calculated by a polyhedral approximation yielding an inner as well as an outer approximation with respect to some error level $\epsilon$ (see e.g. \cite{ESS11,LRU13}), we introduce approximations of sets, respectively functions. Throughout, we fix a parameter $m \in \interior M_+$.

\begin{definition}
\label{defn_approx} Given a set $S \in \mathcal{P}(M;M_+)$ and an error level $\epsilon > 0$, we call
a set $S^{\epsilon} \in \mathcal{P}(M;M_+)$ an approximation of $S$, if
\[
	S^{\epsilon} + \epsilon m \subseteq S \subseteq S^{\epsilon}.
\]
Given a set-valued function $F: \LdzF{} \to \mathcal{P}(M_t;M_{t,+})$ and an error level $\epsilon > 0$, we call
the function $F^{\epsilon}: \LdzF{} \to \mathcal{P}(M_t;M_{t,+})$ an approximation  of $F$ if
\[
	F^{\epsilon}(X) + \epsilon m\1 \subseteq F(X) \subseteq F^{\epsilon}(X) \text{ for every }X \in \LdzF{}.
\]
\end{definition}

In the convex case one can in general only approximately calculate the constraint set $\bar R_{t+1}(X)$ in the backward recursion \eqref{eq_alg-approx-t}, respectively \eqref{setOP}.
Let us study the robustness of the set-optimization problem \eqref{setOP} to this perturbation of the constraints.

\begin{proposition}
\label{propo3.5}
Let $\epsilon > 0$. Let $\bar R^\epsilon_{t+1}(X)[\omega_{t+1}]$ be an $\epsilon$-approximation of $\bar R_{t+1}(X)[\omega_{t+1}]$ for each $\omega_{t+1} \in \succ(\omega_t)$, then $\bar{R}_t^{\epsilon}(X)[\omega_t]  $ defined by
\begin{align}
\label{appCVOP}
	\bar{R}_t^{\epsilon}(X)[\omega_t]  := \cl\bigcup_{Z \in \bar{\ZZ}_t^\epsilon} R_{t,t+1}(-Z)[\omega_t],
\end{align}
with
\[
\bar{\ZZ}_t^\epsilon:=\{Z\in M_{t+1}[\omega_t]:  Z(\omega_{t+1}) \in
    \bar{R}^\epsilon_{t+1}(X)[\omega_{t+1}] \; \forall \omega_{t+1} \in \succ(\omega_t)\},
\]
is an  $\epsilon$-approximation of $\bar R_t(X)[\omega_t]$ defined in \eqref{eq_alg-approx-t}.
\end{proposition}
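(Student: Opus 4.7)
The plan is to verify the two inclusions of Definition~3.4 separately for $\bar R_t^\epsilon(X)[\omega_t]$ vs.\ $\bar R_t(X)[\omega_t]$: the outer inclusion $\bar R_t(X)[\omega_t] \subseteq \bar R_t^\epsilon(X)[\omega_t]$ should follow from monotonicity of the set-valued objective with respect to inflation of the feasible set, while the inner inclusion $\bar R_t^\epsilon(X)[\omega_t] + \epsilon m \subseteq \bar R_t(X)[\omega_t]$ will be obtained by shifting any feasible point of $\bar{\ZZ}_t^\epsilon$ by $\epsilon m \1$ into $\bar{\ZZ}_t$ and then exploiting $M_t$-translativity of $R_{t,t+1}$.

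For the outer inclusion, I would observe that the hypothesis $\bar R^\epsilon_{t+1}(X)[\omega_{t+1}]+\epsilon m \subseteq \bar R_{t+1}(X)[\omega_{t+1}] \subseteq \bar R^\epsilon_{t+1}(X)[\omega_{t+1}]$ gives in particular $\bar R_{t+1}(X)[\omega_{t+1}] \subseteq \bar R^\epsilon_{t+1}(X)[\omega_{t+1}]$ for each successor $\omega_{t+1} \in \succ(\omega_t)$, hence $\bar{\ZZ}_t \subseteq \bar{\ZZ}_t^\epsilon$. Taking the closed union of $R_{t,t+1}(-Z)[\omega_t]$ over the larger set immediately yields $\bar R_t(X)[\omega_t] \subseteq \bar R_t^\epsilon(X)[\omega_t]$.

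For the inner inclusion, I would pick any $Z \in \bar{\ZZ}_t^\epsilon$ and set $Z' := Z + \epsilon m \1$. Since $Z(\omega_{t+1}) \in \bar R^\epsilon_{t+1}(X)[\omega_{t+1}]$, the approximation hypothesis forces $Z'(\omega_{t+1}) = Z(\omega_{t+1}) + \epsilon m \in \bar R_{t+1}(X)[\omega_{t+1}]$ for every successor, so $Z' \in \bar{\ZZ}_t$. Using $M_t$-translativity of $R_{t,t+1}$ (property~(1) applied with shift $-\epsilon m \1 \in M_{t+1}$, noting $m \in \interior M_+ \subseteq M$), one gets $R_{t,t+1}(-Z)[\omega_t] + \epsilon m = R_{t,t+1}(-Z')[\omega_t]$ pointwise at $\omega_t$. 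Taking the union over $Z \in \bar{\ZZ}_t^\epsilon$ and using that translation commutes with closure (since $u \mapsto u+\epsilon m$ is a homeomorphism on $M$), I obtain
\[
\bar R_t^\epsilon(X)[\omega_t] + \epsilon m \;=\; \cl \bigcup_{Z \in \bar{\ZZ}_t^\epsilon} R_{t,t+1}(-Z')[\omega_t] \;\subseteq\; \cl \bigcup_{Z' \in \bar{\ZZ}_t} R_{t,t+1}(-Z')[\omega_t] \;=\; \bar R_t(X)[\omega_t],
\]
which is exactly the desired inner inclusion.

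The main technical point to be careful about is the interplay between the closure operation and the $\epsilon m$-shift: one needs that $(\cl A) + \epsilon m = \cl(A+\epsilon m)$, which holds because translation by a fixed vector is a homeomorphism, and that $M_t$-translativity survives the $\omega_t$-wise projection, which is immediate from the local property of $R_t$ used in the construction of $R_{t,t+1}[\omega_t]$ above Remark~3.3. Everything else is a bookkeeping exercise tracking $Z \leftrightarrow Z' = Z+\epsilon m \1$ through the nested union.
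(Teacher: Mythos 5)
Your proof is correct and follows essentially the same route as the paper: both arguments reduce to the two set inclusions $\bar{\ZZ}_t \subseteq \bar{\ZZ}_t^\epsilon \subseteq \bar{\ZZ}_t - \epsilon m \1$ derived from the approximation hypothesis at the successor nodes, followed by monotonicity of the closed union and $M_t$-translativity applied to the shift $\epsilon m \1$ (your version merely spells out the reindexing $Z \mapsto Z + \epsilon m\1$ and the commutation of translation with closure, which the paper leaves implicit). The only cosmetic imprecision is your reference to $-\epsilon m\1 \in M_{t+1}$; translativity of $R_{t,t+1} = R_t|_{M_{t+1}}$ is inherited from the $M_t$-translativity of $R_t$, which applies here because the deterministic shift $\epsilon m\1$ is $\Ft{t}$-measurable.
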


\begin{proof} The assumption implies $\bar{\ZZ}_t\subseteq \bar{\ZZ}_t^\epsilon\subseteq \bar{\ZZ}_t -\epsilon m\1$. This together with \eqref{setOP} and transitivity of $R_t$ yields
\begin{align*}
\bar{R}_t&(X)[\omega_t]  =\cl\bigcup_{Z\in \bar{\ZZ}_t} R_{t,t+1}(-Z)[\omega_t]\subseteq \cl\bigcup_{Z\in \bar{\ZZ}_t^\epsilon} R_{t,t+1}(-Z)[\omega_t]= \bar{R}_t^{\epsilon}(X)[\omega_t]
\\
&\subseteq \cl\bigcup_{Z+\epsilon m\1\in \bar{\ZZ}_t} R_{t,t+1}(-Z)[\omega_t]= \cl\bigcup_{Z\in \bar{\ZZ}_t} R_{t,t+1}(-Z)[\omega_t]-\epsilon m=\bar{R}_t(X)[\omega_t] -\epsilon m.
\end{align*}
Thus, $\bar{R}_t^{\epsilon}(X)[\omega_t] + \epsilon m \subseteq\bar{R}_t(X)[\omega_t] \subseteq \bar{R}_t^{\epsilon}(X)[\omega_t] $, i.e. $\bar{R}_t^{\epsilon}(X)[\omega_t] $ is a $\epsilon$-approximation of $\bar{R}_t(X)[\omega_t]$.
\end{proof}

Next, we discuss under which conditions problem~\eqref{appCVOP} is a convex vector optimization problem and which additional assumptions are necessary to apply the algorithm proposed in \cite{LRU13} to calculate a polyhedral approximation of the upper image of this convex vector optimization problem.

\begin{assumption}
\label{assCVOP}
\begin{itemize}
\item[a)]
Let the 
objective function in \eqref{appCVOP} be of the form $R_{t,t+1}(Z)[\omega_t]=\{\Psi_t(Z,z)+M_+: g_t(Z,z)\leq 0\}$ for an 
$M_+$-convex vector function $\Psi_t$, a component-wise convex vector function $g_t$ and a vector $z$, all of appropriate and finite dimensions.
\item[b)] Let the function $\Psi_t$ in a) be continuous, and let the feasible set $\XX:=\{(Z,z):g_t(Z,z)\leq 0\}$ satisfy
$\interior\XX\neq\emptyset$. 
\end{itemize}
\end{assumption}
Assumption~\ref{assCVOP}~a) means that the closure of $R_{t,t+1}(Z)[\omega_t]$ is itself the upper image of a convex vector optimization problem.

\begin{proposition}
\label{propoCVOP}
Let the objective function $R_{t,t+1}(Z)[\omega_t]$ in \eqref{appCVOP} satisfy assumption~\ref{assCVOP}~a) and let $\bar R^\epsilon_{t+1}(X)[\omega_{t+1}]$ be a polyhedron for each $\omega_{t+1} \in \succ(\omega_t)$. Then, problem~\eqref{appCVOP} is a convex vector optimization problem.
\end{proposition}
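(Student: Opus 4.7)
The plan is to rewrite problem \eqref{appCVOP} in the standard vector optimization form \eqref{VOP} and verify, under the given hypotheses, that the objective is $M_+$-convex and that the feasible set is convex. The two key observations are that Assumption~\ref{assCVOP}~a) gives $R_{t,t+1}(-Z)[\omega_t]$ as the upper image of an $M_+$-convex parametric problem, and that polyhedrality of each $\bar R^\epsilon_{t+1}(X)[\omega_{t+1}]$ turns the node-wise constraints $Z(\omega_{t+1}) \in \bar R^\epsilon_{t+1}(X)[\omega_{t+1}]$ into finitely many affine inequalities.

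First I would substitute the upper-image representation from Assumption~\ref{assCVOP}~a) into \eqref{appCVOP}, obtaining
\[
\bar R_t^\epsilon(X)[\omega_t] \;=\; \cl\bigcup \lrcurly{\Psi_t(-Z,z) + M_+ : (Z,z)\in \XX_t^\epsilon},
\]
where $\XX_t^\epsilon$ collects $g_t(-Z,z)\le 0$ together with the conditions $Z(\omega_{t+1}) \in \bar R^\epsilon_{t+1}(X)[\omega_{t+1}]$ for each $\omega_{t+1}\in\succ(\omega_t)$. Writing each polyhedron $\bar R^\epsilon_{t+1}(X)[\omega_{t+1}]$ as an intersection of finitely many closed half-spaces $A_{\omega_{t+1}} u \ge b_{\omega_{t+1}}$ converts the corresponding membership conditions into a finite system of affine inequalities in the decision variables; this contributes a polyhedral, hence convex, piece to the feasible set.

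Next I would verify the convexity of the remaining pieces. Since $\Psi_t$ is $M_+$-convex and $(Z,z)\mapsto(-Z,z)$ is affine, the composition $(Z,z)\mapsto \Psi_t(-Z,z)$ is again $M_+$-convex. Componentwise convexity of $g_t$ is likewise preserved under this affine change of variable, so $\{(Z,z) : g_t(-Z,z)\le 0\}$ is convex; intersecting with the polyhedral constraints from the previous step leaves a convex feasible set. Hence \eqref{appCVOP} fits the definition of a convex vector optimization problem with vector objective $\Psi_t(-Z,z)$ and ordering cone $M_+$. To match the template \eqref{VOP} exactly, I would introduce an image-space variable $Y$ and append the epigraphical constraint $Y \in \Psi_t(-Z,z)+M_+$ (which, in terms of the generators $\hat M$ of the positive dual $M_+^+$, reads $\trans{\hat M}(Y-\Psi_t(-Z,z))\ge 0$ and is convex by $M_+$-convexity of $\Psi_t$), taking the linear $\Phi_t(Z,Y,z)=\{Y\}$ as the objective; its upper image equals $\bar R_t^\epsilon(X)[\omega_t]$.

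The step requiring the most care is ensuring that the membership conditions $Z(\omega_{t+1}) \in \bar R^\epsilon_{t+1}(X)[\omega_{t+1}]$ really contribute a convex (in fact polyhedral) piece to the feasible set; this is exactly where the polyhedrality hypothesis on $\bar R^\epsilon_{t+1}$ is essential, since a merely convex description of these sets would leave an abstract constraint that would not fit the convex vector optimization framework in a form usable by the algorithm of \cite{LRU13}. Beyond this, the argument is a direct check that $M_+$-convexity of $\Psi_t$ and componentwise convexity of $g_t$ survive the affine substitution $Z\mapsto -Z$, and that intersections of convex sets remain convex.
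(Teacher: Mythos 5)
Your proposal is correct and follows essentially the same route as the paper's proof: rewrite \eqref{appCVOP} in the form \eqref{VOP} with linear objective $\Phi_t(Z,Y)=\{Y\}$, express the graph constraint $Y\in R_{t,t+1}(-Z)[\omega_t]$ via the generators $\hat M$ of $M_+^+$ as the componentwise convex inequality $\trans{\hat M}(\Psi_t(-Z,z)-Y)\leq 0$ together with $g_t(-Z,z)\leq 0$, and note that polyhedrality of $\bar R^\epsilon_{t+1}(X)[\omega_{t+1}]$ makes the remaining constraints affine. Your intermediate presentation with objective $\Psi_t(-Z,z)$ before lifting to the $Y$-variable is only a cosmetic difference.
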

\begin{proof}
Similar to \eqref{VOP}, the set-valued problem \eqref{appCVOP} can be written as a vector optimization problem
\begin{align*}
	\inf_{(Z,Y)\in\ZZ_t^{\epsilon}} \Phi_t(Z,Y),
\end{align*}
by setting $\Phi_t(Z,Y)=\{Y\}$ (which is a linear vector function), defining the feasible set as $\ZZ_t^{\epsilon} =\{(Z,Y)\in M_{t+1}[\omega_t]\times M: \; Y \in R_{t,t+1}(-Z)[\omega_t], Z(\omega_{t+1}) \in
    \bar{R}^\epsilon_{t+1}(X)[\omega_{t+1}] \; \forall \omega_{t+1} \in \succ(\omega_t)\}$ and using $M_+$ as the ordering cone.
The constraints $Z(\omega_{t+1}) \in \bar{R}^\epsilon_{t+1}(X)[\omega_{t+1}] $ are by assumption linear.
Under assumption~\ref{assCVOP}~a), the constraints $Y\in R_{t,t+1}(-Z)[\omega_t]$ in \eqref{VOP} can be equivalently written as
\[
(Y,-Z)\in\operatorname{graph} R_{t,t+1}[\omega_t] \; \iff \; \trans{\hat M}(\Psi_t(-Z,z)-Y)\leq 0,\; \; g_t(-Z,z)\leq 0,
\]
where the matrix $\hat M$ contains the generating vectors of $M_+^+$.  $\trans{\hat M}(\Psi_t(-Z,z)-Y)$ is a component-wise convex vector function since $\Psi_t$ is a $M_+$-convex vector function. Thus, these are convex constraints and \eqref{appCVOP} is a convex vector optimization problem.
\end{proof}

The additional assumptions~\ref{assCVOP}~b) are necessary to ensure that problem~\eqref{appCVOP} can be (approximately) solved by the algorithms presented in \cite{LRU13}.
In detail, under assumptions~\ref{assCVOP} and if the feasible set $\XX:=\{(Z,z):g_t(Z,z)\leq 0\}$ is compact,  \cite[theorems~4.9 and~4.14]{LRU13} state that the algorithms in  \cite{LRU13} provide an approximation of the upper image of \eqref{appCVOP}, i.e. a polyhedral approximation of  $\bar R_t(X)$, if they terminate. However, the compactness assumption is typically not satisfied in the setting of risk measures. In that case  \cite[remark~3 in section~4.3]{LRU13} shows that the algorithms presented in \cite{LRU13} still return an approximation of the upper image of \eqref{appCVOP} as long as all the scalar optimization problems within the algorithm can be solved and the algorithm terminates.
In the example of the set-valued entropic risk measure considered in section~\ref{sec_entropic}, this will indeed be the case.

Since in general problem~\eqref{appCVOP}  can only be solved approximately (e.g. by the algorithms in \cite{LRU13}), one also need to study how the approximation errors made at different time points accumulate over time.
\begin{proposition}
\label{prop_recursive-approxa}
Let $\epsilon,\gamma > 0$.
If $\bar{R}_t^{\epsilon,\gamma}(X)[\omega_t] $ is a $\gamma$-approximation of $\bar{R}_t^{\epsilon}(X)[\omega_t] $ defined in \eqref{appCVOP}, then $\bar{R}_t^{\epsilon,\gamma}(X)[\omega_t] $ is an $(\epsilon+\gamma)$-approximation of $\bar{R}_t(X)[\omega_t] $ defined in \eqref{eq_alg-approx-t}. 
\end{proposition}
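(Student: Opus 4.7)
The plan is a direct chaining of two approximation inclusions, with no analytic complications. The right-hand inclusion of the desired statement follows immediately: by Proposition~\ref{propo3.5} we have $\bar{R}_t(X)[\omega_t] \subseteq \bar{R}_t^{\epsilon}(X)[\omega_t]$, and by the hypothesis that $\bar{R}_t^{\epsilon,\gamma}(X)[\omega_t]$ is a $\gamma$-approximation of $\bar{R}_t^{\epsilon}(X)[\omega_t]$ we have $\bar{R}_t^{\epsilon}(X)[\omega_t] \subseteq \bar{R}_t^{\epsilon,\gamma}(X)[\omega_t]$, so composing gives $\bar{R}_t(X)[\omega_t] \subseteq \bar{R}_t^{\epsilon,\gamma}(X)[\omega_t]$.

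For the left-hand inclusion, I would start from $\bar{R}_t^{\epsilon,\gamma}(X)[\omega_t] + \gamma m \subseteq \bar{R}_t^{\epsilon}(X)[\omega_t]$ (from the $\gamma$-approximation hypothesis), add $\epsilon m$ to both sides, and then apply Proposition~\ref{propo3.5} in the form $\bar{R}_t^{\epsilon}(X)[\omega_t] + \epsilon m \subseteq \bar{R}_t(X)[\omega_t]$ to obtain
\[
\bar{R}_t^{\epsilon,\gamma}(X)[\omega_t] + (\epsilon+\gamma) m \;\subseteq\; \bar{R}_t^{\epsilon}(X)[\omega_t] + \epsilon m \;\subseteq\; \bar{R}_t(X)[\omega_t].
\]
Combining both inclusions yields exactly the definition of an $(\epsilon+\gamma)$-approximation in the sense of Definition~\ref{defn_approx}.

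There is no real obstacle here; the only thing to be careful about is that both approximations are taken with respect to the same fixed direction $m \in \interior M_+$, so that the additive shifts $\epsilon m$ and $\gamma m$ combine cleanly into $(\epsilon+\gamma)m$. Since $m$ is fixed throughout Section~\ref{sec_convex}, this is automatic. The proof is thus essentially a two-line chain of set inclusions and can be written very compactly.
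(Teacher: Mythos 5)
Your proposal is correct and follows exactly the same route as the paper: chain the hypothesis $\bar{R}_t^{\epsilon,\gamma}(X)[\omega_t] + \gamma m \subseteq \bar{R}_t^{\epsilon}(X)[\omega_t] \subseteq \bar{R}_t^{\epsilon,\gamma}(X)[\omega_t]$ with the conclusion of Proposition~\ref{propo3.5} to obtain the two required inclusions. No gaps; your remark about the fixed direction $m$ is consistent with the paper's convention.
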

\begin{proof}
$\bar{R}_t^{\epsilon,\gamma}(X)[\omega_t] $ being a $\gamma$-approximation of $\bar{R}_t^{\epsilon}(X)[\omega_t] $ means \[\bar{R}_t^{\epsilon,\gamma}(X)[\omega_t]  + \gamma m \subseteq \bar{R}_t^{\epsilon}(X)[\omega_t] \subseteq \bar{R}_t^{\epsilon,\gamma}(X)[\omega_t].\]  Proposition~\ref{propo3.5} shows that $\bar{R}_t^{\epsilon}(X)[\omega_t] $
is an  $\epsilon$-approximation of $\bar R_t(X)[\omega_t]$, i.e. \[\bar{R}_t^{\epsilon}(X)[\omega_t]  + \epsilon m \subseteq \bar{R}_t(X)[\omega_t] \subseteq \bar{R}_t^{\epsilon}(X)[\omega_t].\] Both chains of inclusions yield \[\bar{R}_t^{\epsilon,\gamma}(X)[\omega_t]  + (\epsilon+\gamma) m \subseteq \bar{R}_t(X)[\omega_t]  \subseteq \bar{R}_t^{\epsilon,\gamma}(X)[\omega_t] .\]
\end{proof}

We are now ready to prove the main result of this section. Recall that the aim was to (approximately) calculate the multi-portfolio time consistent risk measure $\seq{\tilde R}$ backwards in time in the spirit of a set-valued Bellman's principle. We will see that $\seq{\tilde R}$ can be obtained by solving at each node backwards in time a convex vector optimization problem. In practice, these problems can only be approximately solved. But we are able to determine the overall approximation error, when the approximation error at each node is chosen to be $\epsilon > 0$. One could of course also vary this error level at different nodes or different time points and obtain corresponding results.

\begin{proposition}
\label{prop_recursive-approx}
Let $\seq{R}$ be a conditionally convex dynamic risk measure satisfying assumption~\ref{assCVOP}.  Let $\epsilon > 0$.  Then for any time $t$ and given $X \in \LdzF{}$, we can find a $[(T-t+1)\epsilon+\delta]$-approximation of the multi-portfolio time consistent version $(\tilde R_t(X))_{t\in\T}$ defined in \eqref{eqn_composed_final}, \eqref{eqn_composed}, by calculating backwards in time at each node $\omega_t\in\Omega_t$ an $\epsilon$-approximation of the upper image of the convex vector optimization problem~\eqref{appCVOP}. Here $\delta > 0$ can be chosen arbitrarily small.
\end{proposition}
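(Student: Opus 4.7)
The plan is to combine theorem~\ref{thm_alg2} (which quantifies how closely $\bar R_t$ approximates $\tilde R_t$) with a backward induction on $t$ that chains proposition~\ref{propo3.5} (propagation of constraint-perturbations) with proposition~\ref{prop_recursive-approxa} (accumulation of the algorithmic approximation-error on top of a perturbed constraint).

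First, by theorem~\ref{thm_alg2}, for any preassigned $\delta > 0$ one already has $\bar R_t(X) + \delta m\1 \subseteq \tilde R_t(X) \subseteq \bar R_t(X)$, so $\bar R_t(X)$ is a $\delta$-approximation of $\tilde R_t(X)$. Hence if I can produce some $\bar R_t^{\ast}(X)$ that is a $(T-t+1)\epsilon$-approximation of $\bar R_t(X)$, then concatenating the two pairs of inclusions yields a $[(T-t+1)\epsilon + \delta]$-approximation of $\tilde R_t(X)$, which is exactly the bound claimed. The task therefore reduces to controlling the error between $\bar R_t^{\ast}$ and $\bar R_t$ alone.

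Next, I would define $\bar R_t^{\ast}(X)[\omega_t]$ recursively, backwards in time and at each atom $\omega_t$, as the polyhedral $\epsilon$-approximation returned by the algorithm of~\cite{LRU13} applied to the convex vector optimization problem~\eqref{appCVOP}, but with the exact constraint sets $\bar R_{t+1}(X)[\omega_{t+1}]$ replaced by the previously computed approximations $\bar R_{t+1}^{\ast}(X)[\omega_{t+1}]$; at $t=T$ take $\bar R_T^{\ast}(X)[\omega_T]$ to be any $\epsilon$-approximation of $\bar R_T(X)[\omega_T]$, giving the base case $\alpha_T = \epsilon = (T-T+1)\epsilon$. For the inductive step, assume $\bar R_{t+1}^{\ast}$ is an $\alpha_{t+1} = (T-t)\epsilon$-approximation of $\bar R_{t+1}$ at every $\omega_{t+1} \in \Omega_{t+1}$. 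Proposition~\ref{propo3.5}, whose proof relies solely on the defining two-sided inclusion and is easily relabeled from a fixed $\epsilon$ to an arbitrary positive tolerance $\alpha$, implies that the intermediate set obtained by solving~\eqref{appCVOP} exactly under the $\alpha_{t+1}$-perturbed constraints is an $\alpha_{t+1}$-approximation of $\bar R_t(X)[\omega_t]$. Running the algorithm of~\cite{LRU13} on top introduces an additional $\epsilon$-error, and by the (analogously relabeled) proposition~\ref{prop_recursive-approxa} these errors combine additively, yielding an $(\alpha_{t+1} + \epsilon) = (T-t+1)\epsilon$-approximation of $\bar R_t(X)[\omega_t]$, which closes the induction.

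The principal technical obstacle is articulating explicitly that propositions~\ref{propo3.5} and~\ref{prop_recursive-approxa} continue to hold verbatim when the parameter $\epsilon$ in their statements is replaced by an arbitrary positive approximation tolerance $\alpha$; the arguments use only the defining chain of inclusions $S^{\alpha} + \alpha m \subseteq S \subseteq S^{\alpha}$ together with the monotonicity/translativity already recorded for $R_t$, so the generalization is mechanical but must be spelled out before it can be invoked with the accumulating $\alpha_{t+1} = (T-t)\epsilon$. A secondary concern is the algorithmic step itself: as noted after assumption~\ref{assCVOP}, the feasible set in~\eqref{appCVOP} is typically not compact, so one must appeal to \cite[remark 3 in section 4.3]{LRU13} to guarantee that the scalar subproblems remain solvable and the algorithm terminates at each node, and this is invoked here as a black box.
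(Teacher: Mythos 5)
Your proposal is correct and follows essentially the same route as the paper: invoke theorem~\ref{thm_alg2} for the $\delta$-gap between $\seq{\bar R}$ and $\seq{\tilde R}$, then run a backward induction that chains proposition~\ref{propo3.5} (exact solution under perturbed constraints) with proposition~\ref{prop_recursive-approxa} (adding the per-node algorithmic error), accumulating $(T-t+1)\epsilon$ in total. Your explicit remark that the two propositions must be restated for an arbitrary tolerance $\alpha$ rather than a fixed $\epsilon$ is a point the paper glosses over ("the logic of adding up approximation errors"), and is a welcome clarification rather than a deviation.
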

\begin{proof}
Assumption~\ref{assCVOP} and the local property of $\seq{R}$ imply that all the assumptions of theorem~\ref{thm_alg2} are satisfied, thus a $\delta$-approximation $\seq{\bar{R}}$ of $\seq{\tilde{R}}$ can be calculated $\omega_t$-wise for arbitrarily small $\delta > 0$  by  \eqref{eq_alg-approx-t}, \eqref{eq_alg-approx-T}.

For $t=T$ one obtains an $\epsilon$-approximation  $\bar{R}_T^{\epsilon}(X)$ of $\bar R_T(X)=\cl(\tilde R_T(X))=\cl(R_T(X))$ by calculating an $\epsilon$-approximation of the upper image of the convex vector optimization problem $R_T(X)[\omega_T]=\{\Psi_T(X,z)+M_+: g_T(X,z)\leq 0\}$ (see assumption~\ref{assCVOP}~a)) at each node $\omega_T\in\Omega_T$. $\bar{R}_T^{\epsilon}(X)$ is by construction polyhedral (see the algorithms in \cite{LRU13}) and is the input for problem \eqref{appCVOP} at time $t=T-1$, which is by proposition~\ref{propoCVOP} then a convex vector optimization problem. Its solution would by proposition~\ref{propo3.5} yield an $\epsilon$-approximation  $\bar{R}_{T-1}^{\epsilon}(X)$ of $\bar R_{T-1}(X)$, but one can in general only calculate an $\epsilon$-solution. This $\epsilon$-solution yields an $\epsilon$-approximation  of $\bar{R}_{T-1}^{\epsilon}(X)$, which is by proposition~\ref{prop_recursive-approxa} a $2\epsilon$-approximation of $\bar{R}_{T-1}(X)$.

Going backwards like this yields for any $t$ a $(T-t+1)\epsilon$-approximation of $\bar{R}_t(X)$, which is by theorem~\ref{thm_alg2} and the logic of adding up approximation errors as in proposition~\ref{prop_recursive-approxa} a $(T-t+1)\epsilon+\delta$-approximation of the multi-portfolio time consistent version $\tilde R_t(X)$ for arbitrarily small $\delta > 0$.
\end{proof}

\begin{example}
\label{sec_entropic}
The set-valued entropic risk measure was studied in~\cite{AHR13} in a single period static framework.  The dynamic version was
discussed in~\cite{FR12b}.
As in the scalar case, the entropic risk measure is intimately related to the exponential utility function.
Consider risk aversion parameters $\lambda^t \in \LdpK{\infty}{t}{\R^d_{++}}$, $C_t \in
\mathcal{G}(\LdzF{t};\LdzF{t,+})$ with $0 \in C_t$ and $C_t \cap \LdpK{0}{t}{\R^d_{--}} = \emptyset$.
The dynamic entropic risk measure is defined by
\begin{equation}
\label{entRM}
R_t^{ent}(X;\lambda^t,C_t) := \lrcurly{u \in M_t: \Et{u_t(X + u)}{t} \in C_t }
\end{equation}
for every $X \in \LdzF{}$ where
$u_t(x) =\transp{u_{t,1}(x_1),...,u_{t,d}(x_d)}$ for any $x \in \R^d$ and $u_{t,i}(z)=\frac{1-e^{-\lambda^t_i z}}{\lambda^t_i}$ for $z\in\R$ and $i=1,...,d$.

An approximate calculation of the static entropic risk measure was shown in \cite{LRU13} via solving a convex vector optimization
problem.  With the method presented in section~\ref{sec_alg} we are able to compute an approximation $\seq{\bar{R}^{ent}}$ of the
multi-portfolio time consistent version $\seq{\tilde{R}^{ent}}$ by backward composition for a general space of eligible portfolios
$M_t$, (stochastic) risk aversion parameters $\lambda^t$, and polyhedral parameters $C_t$.  It was proven in \cite{FR12b} that the entropic risk
measure is c.u.c. and multi-portfolio time consistent in the case that $M = \R^d$, constant $\lambda \in \R^d_{++}$, and $C_t = \LdzF{t,+}$, i.e. $R_t^{ent} = \tilde{R}_t^{ent} = \bar{R}_t^{ent}$.

From the definition of the entropic risk measure with $C_t$ polyhedral, it is clear that $R_t^{ent}$ satisfies
assumption~\ref{assCVOP}. Thus, by proposition~\ref{prop_recursive-approx} one can calculate an approximation
of the multi-portfolio time consistent version $(\tilde R_t^{ent}(X))_{t\in\T}$ by calculating backwards in time at each node
$\omega_t\in\Omega_t$ an approximation of the upper image of the convex vector optimization problem~\eqref{appCVOP} using
the algorithms presented in \cite{LRU13}. 
\end{example}

\section{Interpretation and relation to Bellman's principle}
\label{sec_bellman}
The risk measure $\seq{\tilde{R}}$, while
constructed backwards in time, has a nice financial interpretation involving portfolio injections made as time progresses, that is an interpretation forwards in time:
For every choice of a risk compensating portfolio holding $Z_0\in \tilde{R}_0(X)$ at time $t=0$, there exists, by equation \eqref{eqn_composed}, a sequence of
portfolio holdings $\seqone{Z}$ such that
\begin{equation}\label{I1}
	Z_t\in \tilde{R}_t(X)
\end{equation}
and
\begin{equation}\label{I2}
Z_{t-1}\in R_{t-1}(-Z_t).
\end{equation}
Inclusion \eqref{I1} means $Z_0, Z_1,...,Z_T$ are the risk compensating portfolio holdings at times $0,1,...,T$. An intuitive interpretation of \eqref{I2} can be obtained by the following reformulation. Defining the portfolio injections
(respectively withdrawals - if negative) $\seq{u}$ that are needed to update the risk compensating portfolio holdings by $u_t=Z_t-Z_{t-1}$ (with $u_0 = Z_0$), the two conditions on $\seq{Z}$ can be rewritten in terms of $\seq{u}$ as follows
\[u_t \in \tilde{R}_t\lrparen{X + \sum_{s = 0}^{t-1} u_s}\]
for every time $t$, and
\begin{equation}\label{I3}
0\in R_{t,t+1}(-u_{t+1}),
\end{equation}
for $t\in \T\backslash \{T\}$.
Inclusion \eqref{I3} means the risk of the portfolio injection needed at time $t+1$ is acceptable at time $t$ with respect to the one-period risk measure $R_{t,t+1}$. This gives the main interpretation of the backward composition of $\seq{R}$. At each one-period step the original measure $\seq{R}$ is used, but it is used in a time consistent way in the sense of Bellman.

One can observe Bellman's principle of optimality:  The at $t$ truncated optimal solution $(Z_s)_{s=t}^T$ obtained at time $0$ from \eqref{eqn_composed} and a given $Z_0\in \tilde{R}_0(X)$ is still optimal at any later time point $t\in\T$. To see that, note that for
the risk compensating portfolio holding $Z_t\in \tilde{R}_t(X)$, $(Z_s)_{s=t}^T$ satisfies the conditions $Z_s\in \tilde{R}_s(X)$ and
$Z_{s-1}\in R_{s-1}(-Z_s)$, $s\in\{t,...,T\}$ from \eqref{eqn_composed}. 

Let us now explain on how to compute $(\tilde{R}_t(X))_{t \in \T}$ and how to obtain for a given $Z_0\in \tilde{R}_0(X)$ at time $t=0$ a sequence $\seqone{Z}$ of risk compensating portfolio holdings on the realizing path.
$\seq{\tilde{R}}$ can be calculated with the approach discussed in sections~\ref{sec_polyhedral} and~\ref{sec_convex}.  Benson's algorithm also calculates a solution of the linear vector optimization problems in the sense of definition 2.20 in \cite{L11}, respectively, an $\epsilon$-solution in the sense of definition~3.3 in \cite{LRU13} for a convex vector optimization problem. These finite
solution sets are then used to calculate the sequence of risk compensating portfolio holdings $\seq{Z}$, respectively the injection/withdrawal strategy $\seq{u}$, forwards in time on the realizing path by solving an additional linear program, specified in the following, at each point in time.
Let $\bar{\XX}_t[\omega_t] = \{(Z_{t+1}^i[\omega_t],Y_t^i[\omega_t]): i = 1,...,n, \;n\in\N\}\subseteq M_{t+1}[\omega_t] \times M$ 
be the ($\epsilon$-)solution set to the vector optimization problem~\eqref{VOP}.
Let us first explain the method in case of a linear vector optimization problem: For any $Z_0$ in the
risk measure $\tilde{R}_0(X)$, there exists a convex combination of elements of the solution on the efficient frontier (the collection of nondominated vectors with ordering $M_+$) such that 
$Z_0 \geq \sum_{i = 1}^n \lambda^*_i Y_0^i$. This coefficient vector $\lambda^* \in \R^n_+$ can be found by solving any linear optimization problem of the form
\begin{equation}
\label{eq_BellmanLP}
\min_{\lambda \in \R^n_+} \trans{c}\lrparen{Y_0^1, \cdots, Y_0^n}\lambda \quad \text{ subject to } \quad \lrparen{Y_0^1, \cdots, Y_0^n} \lambda \leq Z_0, \quad \trans{\vec{1}}\lambda = 1
\end{equation}
with $c \in \R^d_+ \backslash \{0\}$.
The coefficient vector $\lambda^* \in \R^n_+$ can then be
used to define $Z_0^* := \sum_{i = 1}^n \lambda_i^* Y_0^i$ on the efficient frontier of $\tilde{R}_0(X)$.  Notice that $Z_0^* = Z_0$ if $Z_0$ is
already on the efficient frontier. Additionally, the next time step full capital requirement is given by $Z_1:=\sum_{i = 1}^n
\lambda_i^* Z_1^i$, which might not be on the efficient frontier of $\tilde{R}_1(X)$.  This process is repeated through the event tree
forwards in time.  The choice of cost vector $c$ (or alternatively a nonlinear cost function) determines the possible
liquidation/withdrawal strategy akin to that discussed for the superhedging risk measure in~\cite{LR11}.

In the case of a convex vector optimization problem, one can calculate only a polyhedral approximation (e.g.with error level $\tilde\epsilon=(T+1)\epsilon+\delta$ as in proposition~\ref{prop_recursive-approx}) $\tilde{R}^{\tilde\epsilon}_0(X)$ of $\tilde{R}_0(X)$. Thus, when choosing the initial capital, one would pick a minimal capital from the calculated inner approximation, and not the true set, i.e. $u_0\in \tilde{R}^{\tilde\epsilon}_0(X)+\tilde\epsilon m$. Noting that an $\epsilon$-solution of problem~\eqref{VOP} provides a solution to the linear vector optimization problem whose upper image is the inner approximation, the same procedure as in the linear case can be applied, just replacing $\tilde{R}_0(X)$ by its inner polyhedral approximation. One obtains an $\tilde\epsilon_t$-optimal strategy of risk compensating portfolios $\seq{Z}$, respectively portfolio injections $\seq{u}$, with $\tilde\epsilon_t=(T-t+1)\epsilon+\delta$ when using the same error level $\epsilon>0$ in each iteration step, see proposition~\ref{prop_recursive-approx}.

\section{Notes on market extensions}
\label{sec_market}

Market extensions are considered when one is not only interested in putting a `capital requirement' $u\in R_t(X)$ at time $t$ aside and
holding it until time $T$ to make $X$ risk neutral, but in exploiting the trading opportunities at the market to minimize the amount of
capital needed for risk compensation.
For the definition of the market extension below, we will set $M = \R^d$, i.e. we consider the full space of eligible assets. A justification for that comes from a mathematical as well as an interpretational aspect, which will be detailed in remark~\ref{rem_market-mptc_M} below. But one can already understand that choice by realizing that the role of $M$ comes mainly from a regulatory point of view. A regulator might only allow capital requirements to be made in certain currencies for example, and these capital requirements are held until time $T$. But the market extension is more linked to internal risk measurement and management as one is exploring trading opportunities in possibly all assets, and thus will hold at any time $t$ a portfolio in possibly all assets, so there is no need in restricting the capital requirements to be made in certain assets only.

The \textbf{\emph{market extension}} $\seq{R^{mar}}$ of a dynamic risk measure $\seq{R}$ is given by
\[
	R_t^{mar}(X) := \bigcup_{k \in \K_t} R_t(X - k)
\]
for some $\K_t \subseteq \LdzF{}$ modeling the set of attainable claims. When $\K_t \subseteq \LdzF{t}$ then it immediately follows
that 
\begin{equation}
\label{me}
R_t^{mar}(X) = R_t(X) + \K_t. 
\end{equation}
Let us give a few examples, all are special cases of the set-valued portfolios introduced in \cite{CM13}.
\begin{example}\label{ex1}
In a market with proportional transaction costs,  trading is modeled by a sequence of solvency cones $\seq{K}$, see
\cite{K99,S04,KS09}. $K_t$ is a \textbf{\emph{solvency cone}} at time $t$ if it is an $\Ft{t}$-measurable cone such that for every
$\omega \in \Omega$, $K_t[\omega]$ is a closed convex cone with $\mathbb{R}_+^d \subseteq K_t[\omega] \subsetneq \mathbb{R}^d$. $K_t$ is generated by the bid and ask prices between any two assets at time $t$. In a market with proportional transaction costs,
one would set $\K_t=\LdpK{0}{t}{K_t}$ (see \cite{FR12}). 
\end{example}
\begin{example}
More generally, in markets with illiquidity (convex transaction costs) as in \cite{PP10}, trading is modeled by a sequence of convex solvency regions $\seq{K}$. $K_t$ is a \textbf{\emph{convex solvency region}} at time $t$ if it is an $\Ft{t}$-measurable set such that for every $\omega \in \Omega$, $K_t[\omega]$ is a closed convex set with $\mathbb{R}_+^d \subseteq K_t[\omega] \subsetneq \mathbb{R}^d$.
Then, one would set $\K_t=\LdpK{0}{t}{K_t}$.
\end{example}
\begin{example}
One could also incorporate \textbf{\emph{trading constraints}} on the size of transactions by considering convex random sets $D_t$ (not necessarily mapping into the closed convex upper sets $\mathcal G(\R^d,\R^d_+)$) as follows. Given $t\in\T$, let $D_t : \Omega\to 2^{\R^d}$ (with $2^{\R^d}$ denoting the power set of $\R^d$) be an $\Ft{t}$-measurable function such that $D_t[\omega]$ is a closed convex set and $K_t[\omega] \cap D_t[\omega] \neq \emptyset$ for every $\omega\in\Omega$.
Then, one would set $\K_t=\LdpK{0}{t}{K_t\cap D_t}$.
\end{example}
We now want to consider the market extended multi-portfolio time consistent version of $\seq{R}$ with respect to $\K_t \subseteq
\LdzF{t}$. Different possibilities arise and will be briefly discussed. First note that the market extension of the multi-portfolio time consistent version $\seq{\tilde{R}}$, given by $(\tilde{R}_t(\cdot) + \K_t)_{t \in \T}$, is not multi-portfolio time consistent, thus the solutions do not satisfy Bellman's principle as detailed in section~\ref{sec_bellman} and there is no good economic interpretation. Therefore, only carefully alternating a market extension step \eqref{me} and a backward recursion step, that is  applying \eqref{me} \emph{in each step of the backward recursion} will yield the desired results and interpretations.  As such we will define
\begin{equation}
  \label{eqn_composed2}
  \tilde{R}_t^{mar}(X) := \bigcup_{Z \in \tilde{R}_{t+1}^{mar}(X)} (R_t(-Z) + \K_t) = \bigcup_{k \in \K_t} \bigcup_{Z \in \tilde{R}_{t+1}^{mar}(X-k)} R_t(-Z).
  \end{equation}
This is the multi-portfolio time consistent version of the market extension, i.e., the backward composition of $\seq{R^{mar}}$. The obtained risk measure $(\tilde{R}_t^{mar})_{t\in\T}$ does indeed satisfy Bellman's principle and we will give the economic interpretation of the solutions below.
Equation~\eqref{eqn_composed2} shows that 
the two operations `market extension' and `multi-portfolio time consistent version' are interchangeable at any single time point $t$ within a recursion step under the assumption $M = \R^d$. However, as noted previously, the market extension of the multi-portfolio time consistent version, $(\tilde{R}_t(\cdot) + \K_t)_{t \in \T}$, is not equal to the multi-portfolio time consistent version of the market extension, $\seq{\tilde{R}^{mar}}$, defined in \eqref{eqn_composed2}.
Thus, intrinsic to the definition of $\seq{\tilde{R}^{mar}}$ is that both the market extension and backward recursion (independent of the order of operations) are computed at a time point $t+1$ and used as the input for the backward recursion in the next time point $t$.

In analogy to section~\ref{sec_bellman} for the regulator risk measure, one can obtain a nice financial interpretation of
 the market extended composed risk measure $\seq{\tilde{R}^{mar}}$ involving portfolio injections and trades made forwards in time.  In addition to the sequence of portfolios holdings $\seq{Z}$ one obtained for the regulator risk measure, there additionally exists by equation \eqref{eqn_composed2} a sequence of trades $\seq{k}$ such that $Z_t\in \tilde{R}_t^{mar}(X)$, $k_t\in \K_t$, and $Z_t-k_t\in R_t(-Z_{t+1})$ for every choice of a risk compensating portfolio holding $Z_0\in \tilde{R}_0^{mar}(X)$ at time $t=0$.  That means $Z_0, Z_1,...,Z_T$ are the risk compensating portfolio holdings before trades at times $0,1,...,T$ and $Z_0-k_0,
Z_1-k_1,...,Z_T-k_T$ are the risk compensating portfolio holdings after the trades at times $0,1,...,T$.  Equivalently, the portfolio injections
(respectively withdrawals - if negative) $\seq{u}$, needed to update the risk compensating portfolio holdings and defined by $u_t=Z_t-Z_{t-1}+k_{t-1}$ (with $u_0 = Z_0$), satisfy
\[u_t \in \tilde{R}_t^{mar}\lrparen{X + \sum_{s = 0}^{t-1}(u_s - k_s) - k_t}\]
for every time $t$, and
\[0\in R_t(-u_{t+1}),\]
for $t\in \T\backslash \{T\}$.
This gives the same interpretation as with the composed regulator risk measure discussed in section~\ref{sec_bellman}: The portfolio injections of the next time period $t+1$ are random, but acceptable with respect to the one-period risk measure $R_{t,t+1}$.

Let us now discuss how to calculate $\tilde{R}_t^{mar}(X)$. Assume $\K_t$ is closed and conditionally convex
for all times $t$, then $\seq{\tilde{R}^{mar}}$ can be calculated with the approach discussed in sections~\ref{sec_polyhedral}
and~\ref{sec_convex} by adding $\K_t[\omega_t]$ to the vector optimization problem \eqref{VOP} at each time $t$. 
If $\K_t = \LdpK{0}{t}{K_t}$ for a \emph{solvency cone} $K_t$ for all times $t$ (as in example~\ref{ex1}), the set-valued risk measure can also be computed directly by replacing the ordering cone $M_+ = \R^d_+$ with $K_t[\omega_t]$ in problem \eqref{VOP}.

As with the `regulator risk measure' considered in the previous sections, the market extension $\tilde{R}_t^{mar}(X)$ might not be closed, but an arbitrarily close approximation is given by its closed-valued variant
\[
\bar{R}_t^{mar}(X) := \cl \bigcup_{Z \in \bar{R}_{t+1}^{mar}(X)} (R_t(-Z) + \K_t).
\]
Solving at each node backwards in time the vector optimization problem with objective $\Phi_t(Z,Y + k)$ and feasible region $(Z,Y,k) \in \ZZ_t \times \K_t[\omega_t]$ (with $\bar{R}_{t+1}$ replaced by $\bar{R}_{t+1}^{mar}$) yields $\seq{\bar{R}^{mar}}$.
Benson's algorithm yields  the solution set
$\bar{\XX}_t^{mar}[\omega_t] \subseteq \LdzF{t+1}[\omega_t] \times \R^d \times \K_t[\omega_t]$ (with $\LdzF{t+1}[\omega_t] = \Lp{0}{\cup\succ(\omega_t), 2^{\succ(\omega_t)}, \P(\cdot | \omega_t);\R^d}$),
which can be used to calculate the sequence
$\seqone{Z}$ and now additionally the sequence of trades $\seq{k}$ forwards in time on the realizing path for a given $Z_0\in \bar{R}_0^{mar}(X)$ (or its inner approximation) at time $t=0$.
Utilizing \eqref{eq_BellmanLP}, we can find a convex combination of the $Y$ elements of the solution to describe any portfolio on the efficient frontier of the risk measure, respectively of the inner approximation of it in the convex, non-polyhedral case.  The same convex coefficients are then used for both the trading strategy and the next time step full capital requirements, which are then used as the starting value in the next period.

\begin{remark}
\label{rem_market-mptc_M}
Let us comment on the choice of $M = \R^d$ in this subsection.
After calculating the risk measure $\seq{\tilde{R}^{mar}}$, it is of course possible to choose a subspace of eligible portfolios $M$ and choose the capital requirements to be in that space (if $\tilde{R}_t^{mar}(X) \cap M_t$ is non-empty).

However, if the subspace $M \neq \R^d$ were to be chosen first and used for the recursive computation, then the last equality in \eqref{eqn_composed2} would no longer hold. This would cause several problems as the market extended multi-portfolio time consistent version, i.e. $\tilde{R}_t^{mar,M}(X) := \bigcup_{k \in \K_t} \bigcup_{Z \in \tilde{R}_{t+1}^{mar}(X-k)} R_t(-Z)\cap M_t$, while retaining the capital injection interpretation given above, is in general not multi-portfolio time consistent.  Furthermore, it would be more restrictive than the approach proposed above as  it holds $\tilde{R}_t^{mar,M}(X) \subseteq \tilde{R}_t^{mar}(X) \cap M_t$.

On the other hand, the multi-portfolio time consistent version of the market extension, while being multi-portfolio time consistent, does not admit a good economic interpretation for $M \neq \R^d$.

All of these problems disappear if $M = \R^d$. For this, and the motivation given at the beginning of this subsection, we suggest to use $M = \R^d$ when considering market extensions and if needed one can choose $u_t \in\tilde{R}_t^{mar}(X)\cap M_t$ afterwards.
\end{remark}

\begin{example}
\label{sec_shp}
We can use the algorithm from section~\ref{sec_alg} to compute the set of superhedging prices under either a conical or convex
market model. Dual representations for the set of superhedging portfolios are considered in e.g.
\cite{K99,S04,KS09,FR12} under proportional transaction costs; and in e.g. \cite{PP10,FR12b} under convex transaction costs.
We calculate the set of superhedging portfolios by computing the market-compatible
version of the worst case risk measure.  Let $\seq{R}$ be the worst case risk measure, that is $R_t:\LdzF{}\to\mathcal{P}(\LdzF{t};\LdzF{t,+})$ with \[R_t(X)=\{u\in \LdzF{t}: X+u\in \LdzF{+}\}.\]
The worst case risk measure $\seq{R}$ is conditionally convex and polyhedral with $R_{t,t+1}(-Z)[\omega_t]$ in \eqref{ui} given as the
upper image of a linear vector optimization problem.
By proposition~\ref{propoLVOP} one can calculate $\seq{R}$ $\omega_t$-wise since the worst case risk measure is multi-portfolio time
consistent.

Consider the market extension of the worst case risk measure, where trading is modeled by a sequence of solvency regions $\seq{K}$.
The multi-portfolio time consistent market extension $\seq{\tilde R^{mar}}$ with $\K_t = \LdpK{0}{t}{K_t}$ is nothing else than
the superhedging risk measure.  In particular, for a given claim $X \in \LdzF{}$, the set $SHP_t(X) := \tilde{R}_t^{mar}(-X)$ is the set of
superhedging portfolios of $X$.

Under proportional transaction costs, modeled by a sequence of solvency cones $\seq{K}$, by proposition~\ref{propoLVOP} and the discussion in section~\ref{sec_market}, the set of
superhedging portfolios $SHP_t(X)$ of $X$ can be calculated backwards in time by solving a sequence of linear vector
optimization problems \eqref{VOP} with ordering cone $K_t[\omega_t]$.  This backwards recursive algorithm is exactly the one proposed
in \cite{LR11}, see also \cite{RZ11}, which could be obtained as the simple structure of the worst case risk measure $\seq{R}$ yields a great simplification to the recursive structure \eqref{recursive}, respectively \eqref{eqn_composed_final},
\eqref{eqn_composed}.  Note that this simplification is specific to that example, which means that the method in
\cite{LR11,RZ11} cannot be generalized to other risk measures, whereas the approach discussed in this paper is widely applicable.
\end{example}

\section{Numerical examples}
\label{sec_ex}

In this section we will apply the algorithms for the recursive calculation of polyhedral and conditionally convex risk measures
presented in section~\ref{sec_alg}.  Specifically, we will consider the superhedging risk measure, relaxed worst case risk measure,
average value at risk, and entropic risk measure.

We consider a multi-dimensional tree that approximates the $d-1$ asset prices (denoted in the domestic currency) and assume stock price dynamics under the physical measure $\P$ are given by correlated geometric Brownian motions:
\[dS_t^i = S_t^i(\mu_i dt + \sigma_i dW_t^i), \quad i = 1,...,d-1\]
for Brownian motions $W^i$ and $W^j$ with correlation $\rho_{ij} \in [-1,1]$.  To create a tree for the correlated risky assets, we
follow the approach in~\cite{KM09}.  We expand the tree structure produced in such a method by allowing for $n^{d-1}$ (recombining)
branches for any natural number $n \geq 2$ and consider some maximum change from a parent to child node given by $\nu \in \R_{++}$
(instead of $2^{d-1}$ branches and $\nu = 1$ from the binomial model presented in~\cite{KM09}). 
That is, since every asset can rise or
fall, we consider the set of possible up-down scenarios given by
\[\mathcal{E} = \lrcurly{\transp{w_1,...,w_{d-1}}: w_i \in \lrcurly{-\nu, -\nu + \frac{2\nu}{n-1},..., \nu - \frac{2\nu}{n-1},\nu} \;
\forall i = 1,...,d-1}.\]
We note that in the situation with only a single risky asset, $n = 2$, and $\nu = 1$, this reduces to the Cox-Ross-Rubinstein binomial tree model.
To calculate the (conditional) probabilities of reaching a successor node, we partition the space $\R^{d-1}$ into $n^{d-1}$ boxes in such a way that each
element of $\mathcal{E}$ resides in a unique box, then the probability of rising or falling by level $e \in \mathcal{E}$ is given by
the probability of the (multivariate) normal distribution over the box containing $e$.

For simplicity, we additionally assume that the proportional transaction costs are constant for each of the risky assets, given by $\gamma = \transp{\gamma_1,...,\gamma_{d-1}} \in \R^{d-1}_+$ (possibly $0$).  Thus the bid and ask prices are given by $(S_t^b)^i = S_t^i(1-\gamma_i)$ and $(S_t^a)^i = S_t^i(1+\gamma_i)$ respectively for every $i = 1,...,d-1$.  In the case that $\gamma_i = 0$ then the bid-ask spread is $0$ for the $i^{th}$ risky asset.

Assume the existence of a risk-free asset with dynamics $\seq{B}$ and no bid-ask
spread, i.e. $B_t^b = B_t^a$ at all times $t$.  Further, we consider the case where cash (i.e. the risk-free asset) is an
intermediary for all transactions. That is, the exchange between any two assets is done via cash and not directly.
Under proportional transaction costs, the above simplifying assumptions ensure that the solvency cone $K_t$ at time $t$ is generated by the columns of the matrix
\[\lrparen{\begin{array}{cc}
      \transp{\frac{S_t^a}{B_t}} & -\transp{\frac{S_t^b}{B_t}} \\
      -I_{(d-1) \times (d-1)} & I_{(d-1) \times (d-1)}
    \end{array}}\]
where $I_{(d-1) \times (d-1)}$ denotes the identity matrix with $d-1$ rows and columns.

For the examples that contain convex transaction costs, we restrict ourselves to the two asset case (a single risky and a single risk-free asset; assets $1$ and $0$ respectively) and generate the solvency region $K_t$ at time $t$ by the dual equations: $k \in K_t$ if
\[K_t^+(k) := \lrparen{\begin{array}{c}
            k_0 + \theta_{t,0}\lrsquare{1-\exp\lrparen{-\frac{S_t^b k_1}{\theta_{t,0}}}} \\
            \theta_{t,1}\lrsquare{1-\exp\lrparen{-\frac{k_0}{S_t^a \theta_{t,1}}}} + k_1 \end{array}} \geq 0.
\]
In the above equations we specify a parameter $\theta_t \in \LdpK{\infty}{t}{\R^2_{++}}$ which defines the maximum number of risky asset that can be bought with the riskless asset $\theta_{t,1}$ and the maximum number of the riskless asset that can be bought with the risky asset $\theta_{t,0}$ at time $t$.  That is, beginning from the $0$ portfolio, it would not be possible to attain more than $\theta_{t,0}$ units of the risk-free asset or $\theta_{t,1}$ units of the risky asset.
 For simplicity, we assume that the trading strategy chosen does not impact the future market.  
 Thus, a trade at time $t$ does not affect the market at time $t+1$.

\begin{example}
\label{ex_avar_shp}
Consider a market with proportional transaction costs and two assets (a risk-free bond and a risky asset).  We choose as a market model
a recombining tree with 25 branches and $T = 9$ time steps over a one year time horizon.  Further consider the maximal
possible rise or fall in the Brownian motion to be given by $\nu = 2$.    Consider the market with high proportional transaction costs,
defined by $\gamma = 30\%$.

Let the risk-free rate of return be $10\%$.  Let the drift for the risky asset be $\mu = 12.5\%$ and the volatility given by $\sigma = 0.5$.  Consider the initial value of the risky asset to be $S_0 = \$100$ (measured in the risk-free asset).

Consider the superhedging risk measure and the average value at risk with constant parameter $\lambda = \transp{30\%,30\%}$ on the terminal payoff $X$ of an at the money European put
option, i.e. with strike price $\$100$.  Running the polyhedral algorithm presented in this paper,
the efficient frontier of the time $0$ superhedging risk measure and composed market extended average value at risk $\widetilde{AV@R}_0^{\lambda,mar}(X)$ is given by figure~\ref{fig_avar_shp}.  The circles in figure~\ref{fig_avar_shp} denote the vertices of the risk measures.  It is clear that the superhedging risk measure provides a more conservative set of risk compensating portfolios.  Note that the deviation from a line to the efficient frontier of the superhedging risk measure is due entirely to the transaction costs, which is in contrast to the average value of risk as the corners there are not solely determined by the transaction costs.
\begin{figure}[h]
\centering
\includegraphics[clip=true, trim=1.5in 3.4in 1.5in 3.5in, scale=.75]{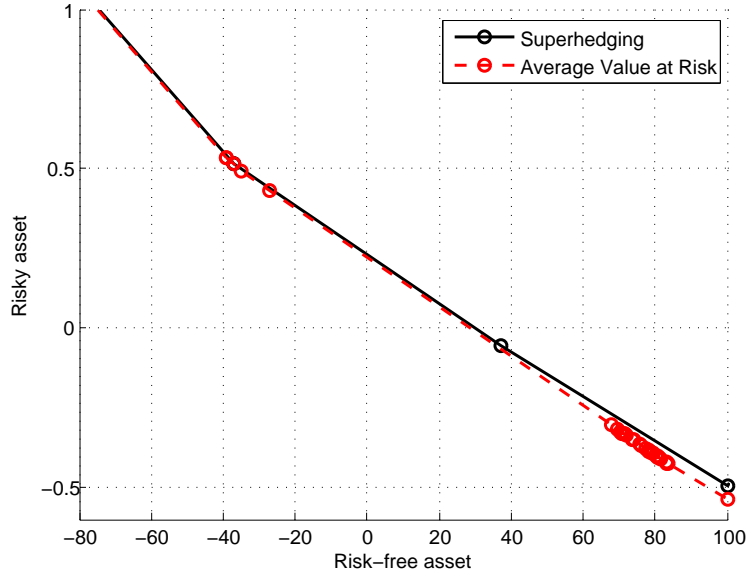}
\caption{Example~\ref{ex_avar_shp}: The superhedging risk measure and composed average value at risk under high proportional transaction costs}
\label{fig_avar_shp}
\end{figure}
\end{example}

\begin{example}
\label{ex_rwc}
Consider a market with proportional transaction costs and three assets (risk-free bond and two correlated risky assets).  We will
approximate the market with a binomial tree model with $T = 20$ time steps over a one year time horizon.  Consider a market with proportional
transaction costs defined by $\gamma = 5\%$.

Let the risk-free rate of return be $10\%$.  Let the drift for the risky assets be given by $\mu_1 = 15\%$ and $\mu_2 = 30\%$.  Let the volatility for the risky assets be given by $\sigma_1 = 0.5$ and $\sigma_2 = 1$.  Let the correlation be given by $\rho = 0.5$.  Consider the initial value of the risky assets to be $S_0 = \transp{\$1 , \$1}$ (measured in the risk-free asset).

Let $X$ be the terminal payoff of an outperformance option with strike price $K = \$1.10$, i.e. $X = \transp{-K I_{\{\max(S_T^a) \geq K\}} , I_{\{(S_T^a)^1 \geq (S_T^a)^2, (S_T^a)^1 \geq K\}} , I_{\{(S_T^a)^2 \geq (S_T^a)^1, (S_T^a)^2 \geq K\}}}$.

Consider the relaxed worst case risk measure with constant parameters $\varepsilon_i = .25$ for $i = 0,1,2$ and $G$ is the convex cone
generated by the vectors $\transp{1,-.25,-.25}$, $\transp{-.25,1,-.25}$, and $\transp{-.25,-.25,1}$.  The market extended
multi-portfolio time consistent version of the relaxed worst case risk measure can be calculated via the polyhedral algorithm
presented in this paper; a contour plot of the efficient frontier of the time $0$ risk measure $\tilde{R}_0^{mar}(X)$ is given by figure~\ref{fig_rwc_contour}.  Notice that, as the capital in the risk-less asset increases the level of capital necessary in the risky assets decreases.  However, increasing capital in one risky asset cannot totally offset a decrease in capital in the other risky asset, as evidenced by the curvature of the contour lines. 
\begin{figure}[h]
\centering
\includegraphics[clip=true, trim=1.5in 3.35in 1.5in 3.4in, scale=.75]{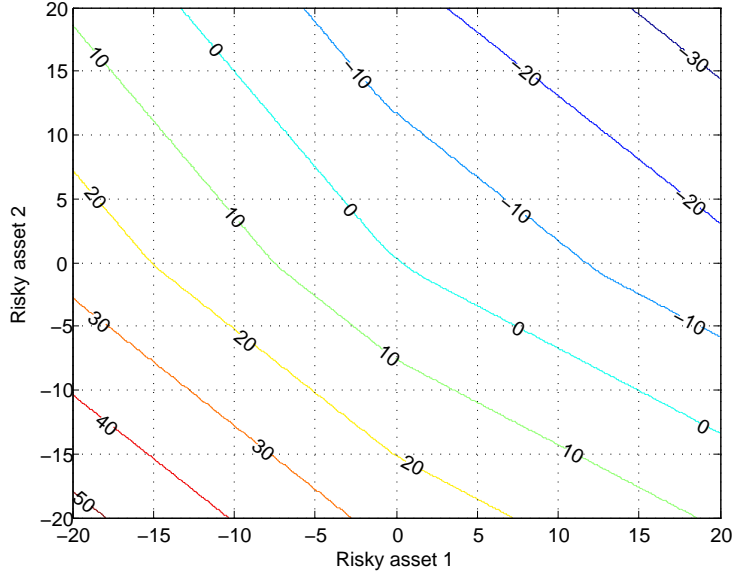}
\caption{Example~\ref{ex_rwc}: Contour plot of the efficient frontier for the composed relaxed worst case risk measure with 3 assets at differing levels of capital in the risk-less asset}
\label{fig_rwc_contour}
\end{figure}
\end{example}

\begin{example}
\label{ex_convex_rm}
Consider a market with convex transaction costs and two assets (risk-free bond and a risky asset).  We consider the 2 time step
Cox-Ross-Rubinstein model.  We consider a market with proportional transaction costs given by $\gamma = 5\%$ and convex transaction costs
given by $\theta_t = \transp{500,500}$ almost surely for each time point.

Let the risk-free rate of return be $0\%$.  Let the drift for the risky asset be $\mu = 12.5\%$ and the volatility given by $\sigma = 0.5$.  Consider the initial value of the risky asset to be $S_0 = \$1$ (measured in the risk-free asset).

Let $X$ be the terminal payoff from an out of the money binary option paying out $\$10$ with strike price $\$1.20$.  We are able to compute an approximation of the efficient frontier at time $0$ for the superhedging and entropic risk measures by running the convex algorithm presented in this text.
We consider two cases for the entropic risk measure, each with constant parameters $\lambda_i^t = 10\%$ for each asset and time and $C_t = \LdpK{0}{t}{C}$ where:
\begin{enumerate}
\item $C = \operatorname{cone}(\transp{1,0},\transp{0,1})$ be the convex cone generated by the vectors $\transp{1,0}$ and $\transp{0,1}$, i.e. the restrictive entropic risk measure; and
\item $C = \operatorname{cone}(\transp{1,-0.90},\transp{-0.90,1})$.
\end{enumerate}
Running the convex algorithm presented in this paper, with the approximation error at time $t = 0$ is given by $\epsilon < 0.30$ in all cases, the efficient frontier of the time $0$ composed market extended risk measures is shown in figures~\ref{fig_shp_entropic_big}, \ref{fig_shp_entropic_mid}, and \ref{fig_shp_entropic_small}.  The first plot, figure~\ref{fig_shp_entropic_big} shows that at a large enough scale the different risk measures all appear identical.  The curvature of the risk measures is also very evident at this size, the asymptotic behavior at $-1500$ in each asset is due to the choice of $\theta$.  In figure~\ref{fig_shp_entropic_mid}, we can see discrepancies between the superhedging risk measure and the two (composed) entropic risk measures.  However both (composed) entropic risk measures still appear to coincide.  In the final plot, figure~\ref{fig_shp_entropic_small}, the distinction between the least restrictive entropic risk measure is pronounced at this zoomed-in level of detail.  
The superhedging risk measure provides the most conservative estimate, in line with the theory.
\begin{figure}[H]
\centering
\includegraphics[clip=true, trim=1.5in 3.35in 1.5in 3.4in, scale=.75]{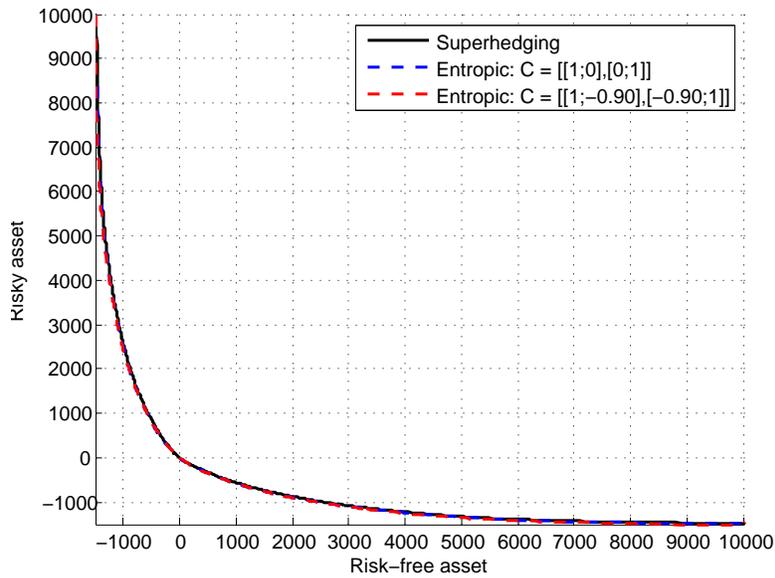}
\caption{Example~\ref{ex_convex_rm}: Convex superhedging and entropic risk measures under proportional and convex transaction costs, zoomed out view}
\label{fig_shp_entropic_big}
\end{figure}
\begin{figure}[H]
\centering
\includegraphics[clip=true, trim=1.5in 3.35in 1.5in 3.4in, scale=.75]{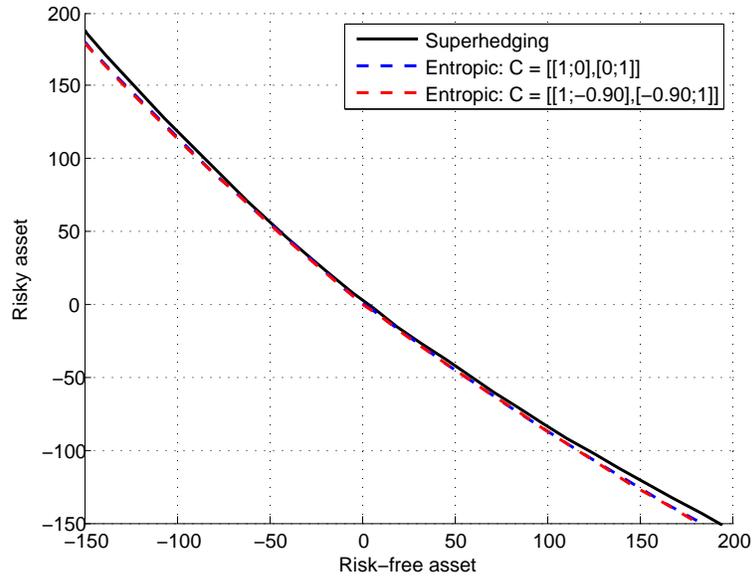}
\caption{Example~\ref{ex_convex_rm}: Convex superhedging price and entropic risk measures under proportional and convex transaction costs, mid-sized view}
\label{fig_shp_entropic_mid}
\end{figure}
\begin{figure}[H]
\centering
\includegraphics[clip=true, trim=1.5in 3.35in 1.5in 3.4in, scale=.75]{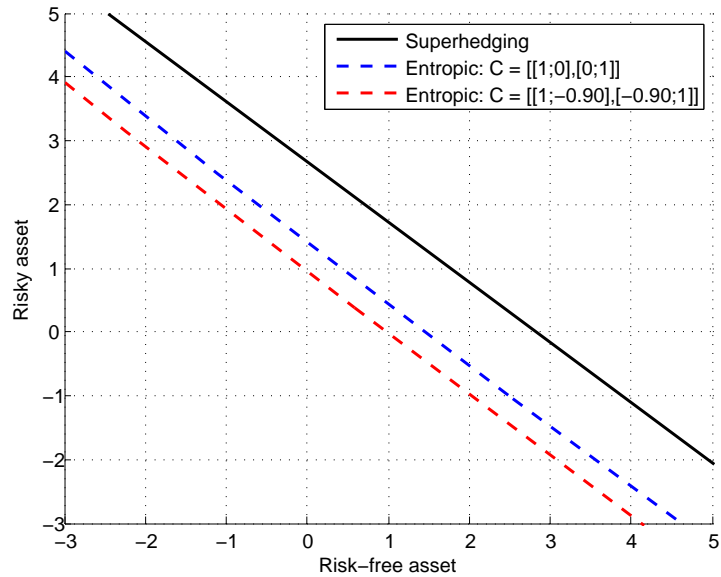}
\caption{Example~\ref{ex_convex_rm}: Convex superhedging price and entropic risk measures under proportional and convex transaction costs, near 0}
\label{fig_shp_entropic_small}
\end{figure}

\end{example}

\bibliographystyle{plain}
\bibliography{biblio}
\end{document}